\newtheorem{thm}{Theorem}
\theoremstyle{definition}
\newtheorem{defn}{Definition}
\newtheorem{exam}{Example}
\newtheorem*{xrmk}{Remark}
\newcommand{\Z}{{\mathbb Z}}
\newcommand{\FF}{{\mathbb F}}
\newcommand{\F}{{\mathbb F}_q}
\newcommand{\C}{{\mathcal C}}
\newcommand{\wt}{{\mathrm{wt} }}
\newcommand{\D}{d_{\text{\tiny free}}}
\newcommand{\I}{\mathbf I}
\newcommand{\supp}{\mathrm{Supp}}
\newcommand{\epsi}{\epsilon}
\newcommand{\vepsi}{\vec{\epsi}}
\newcommand{\gone}[1]{g^{(#1)}_{\vec{\epsi}}}
\newcommand{\cone}[1]{\C^{(#1)}_{\vec{\epsi}}}
\newcommand{\ctwo}[2]{\C^{(#1)}_{\vec{\epsi}^{(#2)}}}
\newcommand{\legn}[2]{\genfrac{(}{)}{}{}{#1}{#2}}
\renewcommand{\vec}{\underline}
\begin{document}




\title{On cyclic codes of composite length and the minimal distance}

\author{Maosheng Xiong
\thanks{M. Xiong is with the Department of Mathematics, Hong Kong University of Science and Technology, Clear Water Bay, Kowloon, Hong Kong (e-mail: mamsxiong@ust.hk).}}

\maketitle


\renewcommand{\thefootnote}{}





\begin{abstract}

In an interesting paper Professor Cunsheng Ding provided three constructions of cyclic codes of length being a product of two primes. Numerical data shows that many codes from these constructions are best cyclic codes of the same length and dimension over the same finite field. However, not much is known about these codes. In this paper we explain some of the mysteries of the numerical data by developing a general method on cyclic codes of composite length and on estimating the minimal distance. Inspired by the new method, we also provide a general construction of cyclic codes of composite length. Numerical data shows that it produces many best cyclic codes as well. Finally, we point out how these cyclic codes can be used to construct convolutional codes with large free distance.

\end{abstract}

\begin{keywords}
Quadratic residue code, cyclic code of composite length, minimum distance, convolutional code.
\end{keywords}

\section{Introduction}\label{sec-into}


The theory of error-correction codes is an important research area, and it has many applications in modern life. For example, error-correction codes are widely used in cell phones to correct errors arising from fading noise during high frequency radio transmission. One of the major challenges in coding theory remains to construct new error-correction codes with good parameters and to study various problems such as finding the minimum distance and designing efficient decoding and encoding algorithms.

Let $\F$ be a finite field of order $q$. Let $n_1,n_2$ be two distinct odd primes such that $(n_1n_2,q)=1$ and $q$ is a quadratic residue for both $n_1$ and $n_2$. In an interesting paper \cite{Ding1}, Ding provided a general construction of cyclic codes of length $n_1n_2$ and dimension $(n_1n_2+1)/2$ over $\F$ by using generalised cyclotomies of order two in $\Z_{n_1n_2}^*$, the multiplicative subgroup of the ring $\Z_{n_1n_2}:=\Z/n_1n_2\Z$. This construction is reminiscent of that of quadratic residue codes of prime length, which can be defined by using cyclotomy of order two in $\Z_n^*$ when $n$ is a prime number. There are three different generalised cyclotomies of order two in $\Z_n^*$ when $n$ is a product of two distinct odd primes, corresponding to Ding's first, second and third construction, each of which yields 8 cyclic codes of length $n$ and dimension $(n+1)/2$. More interestingly, by using Magma and extensive numerical computation, Ding found some striking information on these 24 cyclic codes, which we list below in the table (see \cite{Ding1}):


\begin{center}
\begin{tabular}{|c||c||c|| c|}
\hline
\hline
$(n_1,n_2,q)$ & $(7,17,2)$ &$(11,13,3)$ & $(5,7,4)$\\
[0.5ex]
\hline
\hline
$\left[n_1n_2,\frac{n_1n_2+1}{2}\right]$ & $[119,60]$ & $[143,72]$ & $[35,18]$\\
\hline
\hline
Construction 1 &4 codes $d=12$* &4 codes $d=12$* &4 codes $d=8$*\\
&4 codes $d=11$ &4 codes $d=11$ &4 codes $d=7$ \\
\hline
Construction 2 &4 codes $d=12$* &4 codes $d=12$* &4 codes $d=7$\\
&4 codes $d=6$ &4 codes $d=6$ &4 codes $d=4$\\
\hline
Construction 3 &4 codes $d=8$ &4 codes $d=12$*&4 codes $d=8$*\\
&4 codes $d=4$ &4 codes $d=6$&4 codes $d<8$\\
\hline
\hline
\end{tabular}


\noindent {\small *: best cyclic codes of the same length \& dim over $\F$. Here $d$ is the min distance. }
\end{center}

The above data clearly demonstrates that Ding's three constructions of cyclic codes are promising and may produce many best cyclic codes. Hence it is worth studying these cyclic codes.

In this paper we provide a theory on Ding's three constructions that partially explains some mysteries of the above data: first, we prove that under permutation equivalence, there are indeed two codes in each construction; second, we prove an ``almost'' square-root bound on the minimum distance which is satisfied by all these codes; third, for Ding's second and third construction, we illustrate why half of the cyclic codes have relatively small minimal distance (see Theorems \ref{4:thm4}, \ref{4:thm5} and \ref{4:thm6} in Section \ref{sec4}). In order to study these codes, we develop a general method to study cyclic codes of composite length and to estimate the minimum distance (see Theorem \ref{2:thm1} in Section \ref{sec2}). This method seems new and may be useful for other problems in coding theory. Inspired by this method, we provide a general construction of cyclic codes of composite length which are also related to quadratic residue codes of prime length and study their properties (see Theorems \ref{3:thm2} and \ref{3:thm3} in Section \ref{sec3}). Numerical data shows that many of these codes are the best cyclic codes of the same length and dimension over $\F$ and hence are interesting. Finally, in light of \cite[Theorem 3]{Jus}, cyclic codes of composite length and the estimate of the minimal distance are very useful in the construction of convolutional codes with a prescribed free distance. As applications of results in the paper, we construct two families of convolutional codes with large free distance by cyclic codes from our construction and from Ding's constructions (Theorem \ref{5:thm9} in Section \ref{sec5}). In Section \ref{conclude} we conclude the paper and propose two open problems.



\section{Cyclic codes of composite length} \label{2:general cyclic code} \label{sec2}

In this section we study cyclic codes of composite length. We first introduce some standard notation which will be used throughout the paper.

Let $\F$ be the finite field of order $q$, where $q$ is a prime power. A linear $[n,k,d;q]$ code $\C$ is a $k$-dimensional subspace of $\F^n$ with minimum (Hamming) distance $d=d(\C)$. A linear $[n,k]$ code $\C$ over $\F$ is called a \emph{cyclic code} of length $n$ if any $(c_0,c_1,\ldots, c_{n-1}) \in \C$ implies $(c_{n-1},c_0,c_1,\ldots,c_{n-2}) \in \C$. By identifying any vector $(c_0,c_1,\ldots,c_{n-1}) \in \F^n$ with
\[c(x)=c_0+c_1x+\cdots+c_{n-1}x^{n-1} \in R_n:=\F[x]/(x^n-1),\]
it is known that $\C$ is a cyclic code of length $n$ over $\F$ if and only if the corresponding subset of $R_n$, still written as $\C$, is an idea of the ring $R_n$. Since every ideal of $R_n$ is principal, there is a monic polynomial $g(x) \in \F[x]$ of least degree such that $\C=(g(x)) \subset R_n$. This $g(x)$ is unique, satisfying $g(x)|(x^n-1)$ and is called the \emph{generator polynomial} of $\C$, and $h(x):=(x^n-1)/g(x)$ is called the \emph{parity-check polynomial} of $\C$.

Two codes $\C_1$ and $\C_2$ are called permutation equivalent, written as $\C_1 \sim \C_2$ if there is a permutation of coordinates that sends $\C_1$ to $\C_2$. The permutation of coordinates is called a permutation equivalence.

For any $c(x) \in \F[x]$, define $\supp(c(x))$ to be the set of integers $i$ such that the term $x^i$ appears in $c(x)$. Define the weight $\wt(c(x))$ to be the cardinality of the set $\supp(c(x))$. The main theorem is the following.

\begin{thm} \label{2:thm1}  Let $n,r \ge 2$ be positive integers such that $\gcd(nr,q) =\gcd(n,r)=1$. Assume that $r|(q-1)$. Let $\theta$ be a primitive $nr$-th root of unity in some extension of $\F$. Define $\lambda:=\theta^n$ and let $\bar{n}$ be a positive integer such that $n \bar{n} \equiv 1 \pmod{r}$. For any $0 \le t \le r-1$, let $\theta_t:=\lambda^{\bar{n}t}$. We define the map
\[ \phi:  \frac{\F[x]}{(x^{nr}-1)}  \longrightarrow  \left(\frac{\F[x]}{(x^n-1)}\right)^r\] by \[\phi:  c(x)  \mapsto \frac{1}{r} \left(\sum_{t=0}^{r-1} c_t(x) \lambda^{-tk}\right)_{k=0}^{r-1}\,\,,\]
here for any $c(x) \in \F[x]/(x^{nr}-1)$, the polynomial $c_t(x) \in \F[x]/(x^n-1)$ is given by
\[c_t(x) \equiv c(x\theta_t) \pmod{x^n-1} \quad \forall \, 0 \le t \le r-1.\]
Then:
\begin{itemize}
\item[1)] the map $\phi$ is a permutation equivalence, and $c(x) \ne 0$ if and only if $\left(c_t(x)\right)_{t=0}^{r-1} \ne 0$;

\item[2)] if $c(x) \ne 0$, then
\[ \wt(c(x)) \ge \min_t \left\{\wt(c_t(x)): c_t(x) \ne 0\right\}.\]
Moreover,
\begin{itemize}
\item[2.1)] if $c_0(x)=c_1(x)=\cdots=c_{r-1}(x)$, then $\wt(c(x))=\wt(c_0(x))$;

\item[2.2)] if $c_t(x)=0$ for some $t$, then
\[\wt(c(x)) \ge 2 \min_t \left\{\wt(c_t(x)): c_t(x) \ne 0\right\}.\]
\end{itemize}
\end{itemize}
Let $\C =(g(x)) \subset \F[x]/(x^{nr}-1)$ be a cyclic code with the generator polynomial $g(x)$. Then
\begin{itemize}
\item[3)] $\C$ is permutation equivalent to $\phi(C)$, which is given by
\[\phi(\C)=\left\{\left(\sum_{t=0}^{r-1} c_t(x) \lambda^{-tk}\right)_{k=0}^{r-1}: c_t(x) \in \C_t \quad \forall \, 0 \le t \le r-1\, \right\},\]
where $\C_t=(g_t(x)) \subset \F[x]/(x^n-1)$ is a cyclic code with the generator polynomial $g_t(x)$ given by
\[g_t(x)=\gcd\left(g(x \theta_t), x^n-1\right)\quad \forall \, 0 \le t \le r-1.\]

\item[4)] If $\C \ne 0$, then \[d(\C) \ge \min_t \left\{d(\C_t): C_t \ne 0 \right\}. \]

Moreover,
\begin{itemize}
\item[4.1)] if $\C_0=\C_1=\cdots=\C_{r-1}$, then $d(\C)=d(\C_0)$;

\item[4.2)] if $C_t =0$ for some $t$, then
\[d(\C) \ge 2 \min_t \left\{d(\C_t): C_t \ne 0 \right\}. \]
\end{itemize}
\end{itemize}
\end{thm}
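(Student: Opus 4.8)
The plan is to prove everything from a single observation: although $\phi$ is written as a discrete-Fourier-type transform, it is in fact \emph{literally} a permutation of the $nr$ coordinates, after which all the weight and distance statements fall out of one ``fiber'' decomposition of the coefficients of $c(x)$. First I would lay down the algebraic skeleton. Since $\lambda=\theta^n$ has multiplicative order $r$ and $r\mid(q-1)$, we have $\lambda\in\F$ and $x^{nr}-1=\prod_{k=0}^{r-1}(x^n-\lambda^k)$, a product of pairwise coprime (and separable, as $\gcd(n,q)=1$) factors, so the Chinese Remainder Theorem yields a ring isomorphism $\F[x]/(x^{nr}-1)\cong\prod_{k=0}^{r-1}\F[x]/(x^n-\lambda^k)$. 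The identity $\theta_t^n=\lambda^{\bar n tn}=\lambda^t$ shows that $p(x)\mapsto p(x\theta_t)$ is an isomorphism $\F[x]/(x^n-\lambda^t)\xrightarrow{\sim}\F[x]/(x^n-1)$, and that $c_t(x)$ is exactly the image under it of the $t$-th CRT component $c(x)\bmod(x^n-\lambda^t)$.

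To prove part 1) I would compute $\phi(c)$ coefficientwise. Writing $c(x)=\sum_i c_ix^i$ and splitting the index as $i=j+mn$ with $0\le j\le n-1$ and $0\le m\le r-1$, the relation $\theta_t^n=\lambda^t$ gives that the coefficient of $x^j$ in the $k$-th entry of $\phi(c)$ equals
\[\frac1r\sum_{m=0}^{r-1}c_{j+mn}\sum_{t=0}^{r-1}\lambda^{t(m-k+\bar n j)}.\]
By orthogonality of the $r$-th roots of unity the inner sum is $r$ when $m\equiv k-\bar n j\pmod r$ and $0$ otherwise, so this coefficient is the single entry $c_{j+m_0n}$ with $m_0=(k-\bar n j)\bmod r$. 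Hence $(k,j)\mapsto j+m_0n$ is a bijection onto $\{0,\dots,nr-1\}$, so $\phi$ is a genuine coordinate permutation and $c=0\iff(c_t)_t=0$. This explicit coordinate identification is the main technical obstacle; everything afterward is bookkeeping built on it.

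For part 2) I would group the coordinates of $c$ into the $n$ fibers $\{\,j+mn:0\le m\le r-1\,\}$ and set $w_j=\#\{m:c_{j+mn}\ne0\}$, so $\wt(c)=\sum_j w_j$. A short computation shows the coefficient of $x^j$ in $c_t(x)$ equals $\lambda^{\bar n tj}\sum_m c_{j+mn}\lambda^{tm}$, which can be nonzero only when the $j$-th fiber is nonzero; thus $\wt(c_t)\le\#\{j:w_j\ge1\}\le\wt(c)$ for every $t$, which gives 2). If all $c_t$ coincide, inverting this fiber transform forces each fiber to carry at most one nonzero coordinate, so $w_j\in\{0,1\}$ and $\wt(c)=\wt(c_0)$, proving 2.1). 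For 2.2), if $c_{t_0}=0$ then $\sum_m c_{j+mn}\lambda^{t_0m}=0$ for every $j$; a fiber with $w_j=1$ would leave a single nonzero term, a contradiction, so every nonzero fiber has $w_j\ge2$ and $\wt(c)=\sum_j w_j\ge2\,\#\{j:w_j\ge1\}\ge2\wt(c_t)$ for all $t$.

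Finally, for parts 3) and 4) I would transport these facts to codes. Because $\C=(g(x))$ is an ideal, under the CRT isomorphism it becomes a product of ideals, so $(c_t)_t$ ranges independently over $\prod_t\C_t$ as $c$ ranges over $\C$; projecting the principal generator and applying the substitution isomorphism identifies $\C_t=(\gcd(g(x\theta_t),x^n-1))$, giving 3). Statement 4) then follows from 2) applied to an arbitrary nonzero $c\in\C$ together with $\wt(c_t)\ge d(\C_t)$ whenever $c_t\ne0$. For 4.1), choosing $c$ with $c_0=\dots=c_{r-1}$ equal to a minimum-weight word of $\C_0$ and invoking 2.1) supplies the matching upper bound $d(\C)\le d(\C_0)$; and for 4.2), the vanishing of some $\C_{t_0}$ forces $c_{t_0}=0$ for every $c\in\C$, so 2.2) applies verbatim to yield the factor-two improvement.
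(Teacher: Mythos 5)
Your proof is correct, and for parts 1), 2), 2.1), 3) and 4) it is essentially the paper's own argument: your coefficientwise orthogonality computation showing that $\phi$ is a literal coordinate permutation is exactly the paper's intermediate map $\psi$ with the index shift $k'\equiv k+\bar{n}z \pmod r$, and your 3), 4) use the same CRT/ideal-projection reasoning. The one genuine divergence is 2.2). The paper argues on the transformed side: it writes $\wt(c)=\sum_z \wt(\vec{h}_z)$, where $\vec{h}_z$ is the image of the nonzero coefficient vector $\vec{c}_z$ under an $r\times u$ Vandermonde-type matrix with $u<r$, and asserts $\wt(\vec{h}_z)\ge 2$ from that matrix structure. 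You argue on the original (fiber) side: $c_{t_0}=0$ forces, for every $j$, the relation $\sum_m c_{j+mn}\lambda^{t_0 m}=0$, whose coefficients are all nonzero, so no fiber can contain exactly one nonzero coordinate. The two statements are equivalent (your fiber $(c_{j+mn})_m$ is, through the coordinate permutation, precisely the paper's column $\vec{h}_j$), but your version is more elementary and in fact more airtight: as written, the paper's appeal to the Vandermonde property is delicate, because what is needed is not invertibility of a $u\times u$ submatrix --- a square submatrix of a DFT matrix taken on non-consecutive rows can be singular over a finite field (for $r=4$, rows $\{0,2\}$ and columns $\{0,2\}$ give the all-ones $2\times 2$ matrix) --- but the inverse-DFT fact that a value vector of weight at most $1$ pulls back to a coefficient vector that is either zero or everywhere nonzero, hence cannot be supported on $u<r$ columns. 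Your one-relation-per-fiber argument encodes exactly that fact, so your 2.2) closes a small rigor gap rather than opening one; everything else matches the paper step for step.
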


\begin{proof} 1). Writting
\[c_t(x)=\sum_{z=0}^{n-1} c_{t,z}x^z, \quad c(x)=\sum_{i=0}^{nr-1} c_i x^i= \sum_{k=0}^{r-1}\sum_{z=0}^{n-1}c_{kn+z}x^{kn+z}, \]
then from $c_t(x) \equiv c(x \theta_t) \pmod{x^n-1}$, we find
\[c_{t,z}=\sum_{k=0}^{r-1} \sum_{z=0}^{n-1} c_{kn+z} \theta_t^{kn+z}, \quad \forall \, t, z. \]
Since $\theta_t=\lambda^{\bar{n}t}$ and $\lambda$ is a primitive $n$-th root of unity, we can obtain
\[c_{kn+z}=\frac{1}{r} \sum_{t=0}^{r-1} c_{t,z} \theta_t^{-kn-z}, \quad \forall \, t,z. \]
Thus we have
\[c(x)=\frac{1}{r} \sum_{t=0}^{r-1} \sum_{z=0}^{n-1} c_{t,z} x^z \sum_{k=0}^{r-1} x^{nk}\lambda^{-t(k+\bar{n} z)}. \]
For any given $z$, let $k' \equiv k+\bar{n}z \pmod{r}$. Noting that $x^{nk} \equiv x^{n(k'-\bar{n}z)} \pmod{x^{nr}-1}$ and as $k'$ runs over a complete residue system modulo $r$, so does $k \equiv k'-\bar{n}z \pmod{r}$, and it is clear that $\psi: x^{nk} \mapsto x^{nk'}$ induces a permutation of coordinates in $\F[x]/(x^{nr}-1)$,  thus $c(x)$ is permutation equivalent to
\begin{eqnarray} \label{2:psi} \psi(c(x))=\frac{1}{r} \sum_{t=0}^{r-1} \sum_{z=0}^{n-1} c_{t,z} x^z \sum_{k=0}^{r-1} x^{nk}\lambda^{-tk}=\frac{1}{r} \sum_{k=0}^{r-1} x^{nk}\sum_{t=0}^{r-1} c_t(x) \lambda^{-tk}.\end{eqnarray}
Hence $\psi(c(x))$ is permutation equivalent to $\phi(c(x))$, and thus $\phi$ is a permutation equivalence. Moreover, noting
\[ c_t(x)=0 \Longleftrightarrow (x^n-1)|c_t(x)  \Longleftrightarrow (x^n-\lambda^t)|c(x),\]
and
\[x^{nr}-1=\prod_{t=0}^{r-1} (x^n-\lambda^t),\]
it is obvious that $c(x) \ne 0$ if and only if $\left(c_t(x)\right)_{t=0}^{r-1} \ne 0$. This proves 1).

2). Since $c_t(x) \equiv c(x \theta_t) \pmod{x^n-1}$, we have
\[\wt(c(x))=\wt(c(x \theta_t)) \ge \wt(c_t(x)) \, \forall \, 0 \le t \le r-1. \]
Taking $c_0(x)=\cdots =c_{r-1}(x)$ in (\ref{2:psi}), and using
\[\sum_{t=0}^{r-1} \lambda^{-tk}=\left\{ \begin{array}{lll}
r&:& \mbox{ if } r|k\\
0&:& \mbox{ if } r \nmid k \end{array}\right., \]
we find easily that $\psi(c(x))=c_0(x)$. This proves 2) and 2.1).

As for 2.2), let
\[A=\left\{0 \le t \le r-1: c_t(x) \ne 0 \right\}, \quad \I=\bigcup_{t \in A} \supp(c_t(x)).\]
From (\ref{2:psi}) we find
\[\wt(c(x))=\sum_{z \in \I} \sum_{k=0}^{r-1} \wt \left(\sum_{t \in A} c_{t,z} \lambda^{-tk}\right). \]
For each $k$ and $z$, write
\[h_{k,z}:=\sum_{t \in A} c_{t,z} \lambda^{-tk}, \]
and for each $z$, write $\vec{h}_z:=(h_{k,z})_{k=0}^{r-1}$ and $\vec{c}_z:=(c_{t,z})_{t \in A}$ as column vectors. Assume that $A=\{t_1,t_2,\ldots,t_u\}$. Then we have
\begin{eqnarray} \label{2:matrix} \left[\begin{array}{cccc}
1&1& \cdots &1\\
\lambda^{-t_1}&\lambda^{-t_2}& \cdots &\lambda^{-t_u}\\
\vdots&\vdots& \vdots &\vdots\\
\lambda^{-(r-1)t_1}&\lambda^{-(r-1)t_2}&\cdots &\lambda^{-(r-1)t_u}\end{array}\right] \cdot \vec{c}_z=\vec{h}_z,\end{eqnarray}
and
\[\wt(c(x))=\sum_{z \in \I} \wt(\vec{h}_z).\]
Noting that for any $z \in \I$, we have $\vec{c}_z \ne 0$, and the matrix on the left side of the equation (\ref{2:matrix}) is a Vandermond matrix of size $r \times u$ where $1 \le u<r$, we find $\wt(\vec{h}_z) \ge 2$. Thus $\wt(c(x)) \ge \sum_{z \in \I} 2 \ge 2 \min\limits_t \{\wt(c_t(x)): c_t(x) \ne 0\}$. This proves 2.2).

3). For any $0 \le t \le r-1$, define the maps $\Phi_t, f_t$
\[\frac{\F[x]}{(x^{nr}-1)} \overset{\Phi_t}{\longrightarrow} \frac{\F[x]}{(x^{n}-\lambda^t)} \overset{f_t}{\longrightarrow} \frac{\F[x]}{(x^{n}-1)}\]
by
\[\Phi_t: c(x) \mapsto c(x) \pmod{x^n-\lambda^t}, \quad f_t: x \mapsto x \theta_t. \]
Let $\Psi_t:=f_t \circ \Phi_t$ and $\Psi=(\Psi_t)_{t=0}^{r-1}$. The isomorphism from the Chinese Remainder Theorem
\[\Psi: \frac{\F[x]}{(x^{nr}-1)} \longrightarrow \prod_{t=0}^{r-1} \frac{\F[x]}{(x^{n}-1)}\]
induces an isomorphism $\Psi(\C) \cong \prod\limits_{t=0}^{r-1} \Psi_t(\C)$. Since clearly $\C_t=\Psi_t(\C)$, hence 3) is proved.

Finally, 4), 4.1) and 4.2) are direct consequences of 2)--2.2). This completes the proof of Theorem \ref{2:thm1}.

\end{proof}

\begin{xrmk} In Theorem \ref{2:thm1}, the requirement $r|(q-1)$ is mild and can be removed as follows: suppose $\C =(g(x)) \subset \F[x]/(x^{nr}-1)$ is a cyclic code and $r \nmid (q-1)$. There is a $k$ such that $r|(q^k-1)$. Let $q'=q^k$ and $\bar{\C}=(g(x)) \subset \FF_{q'}[x]/(x^{nr}-1)$ be the extension of $\C$ over $\FF_{q'}$. It is known that $\wt(\C)=\wt(\bar{\C})$ (\cite[Theorem 3.8.8]{Huf}). Hence to study the minimum distance of $\C$ over $\F$, it is equivalent to study $\bar{\C}$ on which the condition $r|(q'-1)$ is satisfied.
\end{xrmk}

\section{Cyclic codes of composite length related to quadratic residue codes} \label{sec3}

Inspired by Theorem \ref{2:thm1}, we give a construction of cyclic codes of composite length which are closely related to quadratic residue codes of prime length.

Throughout this section, we make the following assumptions:
\begin{itemize}
\item[3.1)] $\F$ is the finite field of order $q$,
\item[3.1)] $n \ge 3$ is a prime number such that $\legn{q}{n}=1$ where $\legn{\cdot}{n}$ is the Legendre symbol for the prime $n$,
\item[3.3)] $r \ge 2$ is a positive integer with $\gcd(nr,q)=\gcd(n,r)=1$.
\end{itemize}

Let $\theta$ be a primitive $nr$-th root of unity in some extension of $\F$ and let $\lambda:=\theta^n$. For each $0 \le t \le r-1$, we may factor $x^n-\lambda^t$ as
\[x^n-\lambda^t=\prod_{j=1}^{n} \left(x-\theta^{t+rj}\right). \]
For each $t$, there is a unique integer $j_t$ where $1 \le j_t \le n$ such that $t+rj_t \equiv 0 \pmod{n}$. Define $\theta_t:=\theta^{t+rj_t}$. It is easy to see that $\theta_t=\lambda^{\bar{n}t}$ where $\bar{n}$ is a positive integer such that $\bar{n}n \equiv 1 \pmod{r}$. We can write
\[x^n-\lambda^t=(x-\theta_t) g_{t,1}(x) g_{t,-1}(x),\]
where for $\epsilon \in \{\pm 1\}$,
\[g_{t,\epsilon}(x)=\prod_{\substack{1 \le j \le n\\
\legn{t+rj}{n}=\epsilon}} \left(x-\theta^{t+rj}\right).\]
For any $\vec{\epsi}=(\epsi_0,\epsi_1,\dots, \epsi_{r-1}) \in \{\pm 1\}^r$, define
\begin{eqnarray} \label{3:g} g_{\vec{\epsi}}(x)=\prod_{t=0}^{r-1} g_{t,\epsi_t}(x). \end{eqnarray}
Let
\begin{eqnarray} \label{3:Lambda} \Lambda:=\left\{\vec{\epsi}=(\epsi_0,\epsi_1,\cdots,\epsi_{r-1}) \in \{\pm 1\}^r: \epsi_j=\epsi_{qj} \,\,\, \forall \, 0 \le j \le r-1\right\}. \end{eqnarray}
Here the subscript $qj$ in $\epsi_{qj}$ is always understood to be $qj \pmod{r}$. For any $\vec{\epsi} \in \Lambda$, since $\legn{q}{n}=1$, it is easy to see that $g_{\vec{\epsi}}(x) \in \F[x]$.

We remark here that a defining set similar to $\Lambda$ has been used in \cite{Ding0} to construct binary duadic codes of prime length.

\begin{defn} Assume 3.1)--3.3). We define $\C_{\vec{\epsi}}$ to be the cyclic code of length $nr$ over $\F$ with the generator polynomial $g_{\vec{\epsi}}(x)$ given in (\ref{3:g}), that is,
\begin{eqnarray} \label{3:defn1}
\C_{\vec{\epsi}}=\left(g_{\vec{\epsi}}(x)\right) \subset \F[x]/(x^{nr}-1).
\end{eqnarray}
\end{defn}

\begin{xrmk}
The code $\C_{\vec{\epsi}}$ in (\ref{3:defn1}) is a cyclic code over $\F$ of length $nr$ and dimension $(n+1)r/{2}$. The total number of these codes is $2^{\tau(q,r)}$, where $\tau(q,r):=\#\Lambda$ is the number of $q$-cyclotomic cosets modulo $r$. In particular there are always at least 4 such codes since $\tau(q,r) \ge 2$ for any $r \ge 2$, and if $q \equiv 1 \pmod{r}$, then $\tau(q,r)=r$, hence in this case the number of such codes is $2^r$. In general there may be many such cyclic codes $\C_{\vec{\epsi}}$.
\end{xrmk}
We first study the codes $\C_{\vec{\epsi}}$ under permutation equivalence. For any $\vec{\epsi} =(\epsi_0,\epsi_1,\cdots,\epsi_{r-1})\in \Lambda$, define $-\vec{\epsi}:=(-\epsi_0,-\epsi_1,\cdots,-\epsi_{r-1})$. For any $u$ with $\gcd(u,r)=1$, define $\vec{\epsi}^{*u}:=(\epsi_t^{*u})_{t=0}^{r-1}$ as
\[\epsi_t^{*u}:= \epsi_{tu} \quad \forall \, 0 \le t \le r-1. \]
We have the following:

\begin{thm} \label{3:thm2} Assume 3.1)--3.3). For any $\vec{\epsi} \in \Lambda$ and any 
$u$ such that $\gcd(u,r)=1$, the codes $\C_{\vec{\epsi}}, \C_{-\vec{\epsi}}$ and $\C_{\vec{\epsi}^{*u}}$ given in (\ref{3:defn1}) are all permutation equivalent.
In particular we have
\[ d \left(\C_{\vec{\epsi}}\right)= d \left(\C_{-\vec{\epsi}}\right)= d \left(\C_{\vec{\epsi}^{*u}}\right).\] 

\end{thm}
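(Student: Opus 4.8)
The plan is to realize each of the three claimed equivalences by a single \emph{multiplier}, i.e.\ a coordinate permutation of the form $\sigma_a\colon x^i \mapsto x^{ai \bmod nr}$ with $\gcd(a,nr)=1$, and to track its effect on the set of roots of the generator polynomial. First I would record the defining set of $\C_{\vec{\epsi}}$. Unwinding (\ref{3:g}), the roots of $g_{\vec{\epsi}}(x)$ are exactly the $\theta^s$ with $s$ ranging over
\[ S_{\vec{\epsi}} = \left\{\, s \in \Z/nr\Z : n \nmid s,\ \legn{s}{n} = \epsi_{s \bmod r} \,\right\}, \]
since an exponent $s=t+rj$ satisfies $s \equiv t \pmod r$, so $t = s \bmod r$, and the factor $(x-\theta_t)$ accounts precisely for the excluded residues with $n \mid s$.

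Next I would use the fact that $\sigma_a$ is an $\F$-linear coordinate permutation sending the cyclic code with defining set $S$ to the one with defining set $a^{-1}S$: if $c(\theta^{s'})=0$ for all $s' \in S$, then $\sigma_a(c)(\theta^s)=c(\theta^{as})$ vanishes exactly when $as \in S$. It then suffices to produce, via the Chinese Remainder Theorem, two integers realizing the desired relabellings. For $\C_{-\vec{\epsi}}$ I would choose $b$ with $b \equiv 1 \pmod r$ and $\legn{b}{n}=-1$; since $b$ fixes residues mod $r$ and flips every Legendre symbol, one checks $b^{-1}S_{\vec{\epsi}} = S_{-\vec{\epsi}}$. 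For $\C_{\vec{\epsi}^{*u}}$ I would choose $a$ with $a \equiv u \pmod r$ and $a \equiv 1 \pmod n$; since $a$ preserves Legendre symbols and sends the class $s \bmod r$ to $us \bmod r$, one checks $a^{-1}S_{\vec{\epsi}} = S_{\vec{\epsi}^{*u}}$. Both $a,b$ are coprime to $nr$, so $\sigma_a,\sigma_b$ are genuine permutation equivalences, and because a cyclic code over $\F$ is determined by its defining set these identifications give $\C_{\vec{\epsi}} \sim \C_{-\vec{\epsi}}$ and $\C_{\vec{\epsi}} \sim \C_{\vec{\epsi}^{*u}}$; equality of minimum distances then follows since permutation equivalence preserves the weight distribution.

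The routine but essential bookkeeping is the factorization $\legn{as}{n}=\legn{a}{n}\legn{s}{n}$ combined with the reduction of indices modulo $r$, which is where all the care is needed; I would also note in passing that $-\vec{\epsi}$ and $\vec{\epsi}^{*u}$ again lie in $\Lambda$ (using $\gcd(u,r)=1$, so that $j \mapsto ju$ permutes $\Z/r\Z$), so the target codes are legitimately members of the family and their generator polynomials lie in $\F[x]$. The only genuine obstacle is getting the direction of the multiplier action right ($a^{-1}S$ versus $aS$) and matching it consistently against the convention $\epsi^{*u}_t = \epsi_{tu}$; once the two CRT choices above are fixed, the three defining sets coincide by inspection and no estimate on the distance itself is required.
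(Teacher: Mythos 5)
Your proposal is correct, but it takes a genuinely different route from the paper's. You work directly with the defining set $S_{\vec{\epsi}} \subset \Z/nr\Z$ of the length-$nr$ code (your identification of the roots of $g_{\vec{\epsi}}(x)$ is right: the factor $(x-\theta_t)$ removed from $x^n-\lambda^t$ is exactly the exponent divisible by $n$ in each residue class mod $r$), and you realize both equivalences by explicit multipliers obtained via CRT: $\sigma_b$ with $b \equiv 1 \pmod{r}$, $\legn{b}{n}=-1$ for the sign flip, and $\sigma_a$ with $a \equiv u \pmod{r}$, $a \equiv 1 \pmod{n}$ for the $*u$ action. The direction of the multiplier action ($\sigma_a$ sends zero set $S$ to $a^{-1}S$) and the match with the convention $\epsi^{*u}_t=\epsi_{tu}$ both check out, as does your (necessary, and in the paper implicit) observation that $-\vec{\epsi}$ and $\vec{\epsi}^{*u}$ again lie in $\Lambda$. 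The paper instead invokes Theorem \ref{2:thm1}: it replaces $\C_{\vec{\epsi}}$ by $\phi(\C_{\vec{\epsi}})$, whose $r$ components range over the quadratic residue codes $\C_{0,\epsi_t}$ of length $n$, then realizes $\C_{-\vec{\epsi}}$ by the substitution $x \mapsto x^s$ (with $\legn{s}{n}=-1$) inside each length-$n$ block, and $\C_{\vec{\epsi}^{*u}}$ by the block permutation $k \mapsto uk$. Your argument is more elementary and self-contained: it works entirely over $\F$ and does not use the hypothesis $r \mid (q-1)$ of Theorem \ref{2:thm1}, whereas the paper's proof, read strictly, needs the field-extension remark following Theorem \ref{2:thm1} (and a descent of the resulting equivalence back to $\F$) whenever $r \nmid (q-1)$. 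What the paper's route buys in exchange is uniformity: the identification $g_t(x)=g_{0,\epsi_t}(x)$ on which it rests is exactly what is reused for the distance bound in Theorem \ref{3:thm3}, so the equivalences and the distance estimates come out of one picture; and, notably, your multiplier technique is the very one the paper itself deploys later for Ding's codes in part ii) of Theorem \ref{4:thm4}.
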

\begin{proof} For $\C_{\vec{\epsi}}$, by Theorem \ref{2:thm1}, we find that
\[g_t(x)=\gcd\left(g_{\vec{\epsi}}(x \theta_t),x^n-1\right)=g_{0,\epsi_t}(x), \quad \forall \, 0 \le t \le r-1. \]
Hence the code $\C_{\vec{\epsi}}$ is permutation equivalent to
\[\phi(\C_{\vec{\epsi}})=\left\{ \left(\sum_{t=0}^{r-1} c_t(x) \lambda^{-tk}\right)_{k=0}^{r-1}: c_t(x) \in \C_{0,\epsi_t} \, \forall \, 0 \le t \le r-1\right\},\]
where
\begin{eqnarray} \label{3:quadratic} \C_{0,\epsi_t} :=\left(g_{0,\epsi_t}(x)\right) \subset \F[x]/(x^n-1)\end{eqnarray}
is a quadratic residue code of length $n$ and dimension $(n+1)/2$ over $\F$. Let $s$ be a positive integer such that $\legn{s}{n}=-1$. It is known that the map $x \mapsto x^s$ induces a permutation equivalence from $\F[x]/(x^n-1)$ to itself and $c_t(x) \in \C_{0,\epsi_t}$ if and only if $c_t(x^s) \in \C_{0,-\epsi_t}$. Thus by taking $x \mapsto x^s$ we find that $\phi(\C_{\vec{\epsi}})$ is permutation equivalent to $\phi(\C_{-\vec{\epsi}})$.

For any $c(x) \in \C_{\vec{\epsi}}$,
\[\phi(c(x))=\left(\sum_{t=0}^{r-1} c_t(x) \lambda^{-tk}\right)_{k=0}^{r-1}=\left(\sum_{t=0}^{r-1} c_{ut}(x) \lambda^{-utk}\right)_{k=0}^{r-1},\]
where $c_{ut}(x) \in \C_{o,\epsi_{ut}}$, and the subscript $ut$ is understood to be $ut \pmod{r}$. It is easy to see that the map $k \mapsto uk$ induces a permutation equivalence between $\phi(c(x))$ and
\[\left(\sum_{t=0}^{r-1} c_{ut}(x) \lambda^{-tk}\right)_{k=0}^{r-1}.\]
Hence $\C_{\vec{\epsi}}$ and $\C_{\vec{\epsi}^{*u}}$ are permutation equivalent. This proves Theorem \ref{3:thm2}.


\end{proof}

Now we can estimate the minimum distance of these codes $\C_{\vec{\epsi}}$. Let $d_{n,q}$ be the minimum distance of the quadratic residue code $\C_{0,\epsi}$ given in (\ref{3:quadratic}) ($\epsi=\pm 1$). It is known that $d_{n,q} > \sqrt{n}$ (\cite{Huf}).

\begin{thm} \label{3:thm3} Assume 3.1)--3.3). For any $\vec{\epsi} \in \Lambda$, the code $\C_{\epsi}$ given in \ref{3:defn1} is an $\left[nr,\frac{(n+1)r}{2} \right]$ code with the following properties:
\begin{itemize}
\item[1)] $d\left(\C_{\vec{\epsi}}\right) \ge d_{n,q}$;
\item[2)] if $\vec{\epsi}=(0,0,\cdots,0)$ or $(1,1,\cdots,1)$, then $d\left(\C_{\vec{\epsi}}\right) = d_{n,q}$\,;

\item[3)] if $\vec{\epsi} \ne (0,0,\cdots,0)$ or $(1,1,\cdots,1)$, then
\[d\left(\C_{\vec{\epsi}}\right) \ge \min \left\{d_{n,q}+1, \frac{\sqrt{8n+1}-1}{2}\right\}. \]

\end{itemize}
\end{thm}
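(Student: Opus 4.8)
The statements 1) and 2) fall straight out of Theorem \ref{2:thm1}: by the computation already made for Theorem \ref{3:thm2}, the component codes are $\C_t=\C_{0,\epsi_t}$, each a quadratic residue code of length $n$ with minimum distance $d_{n,q}$, so part 4) gives $d(\C_{\vec{\epsi}})\ge d_{n,q}$ and part 4.1) gives equality when all $\epsi_t$ agree. The whole difficulty is therefore in 3). Since $d(\C_{\vec{\epsi}})\ge d_{n,q}$ always holds, it is enough to show that $d(\C_{\vec{\epsi}})=d_{n,q}$ forces $\binom{d_{n,q}+1}{2}\ge n$, equivalently $d_{n,q}\ge\frac{\sqrt{8n+1}-1}{2}$: if instead $d(\C_{\vec{\epsi}})>d_{n,q}$ then $d(\C_{\vec{\epsi}})\ge d_{n,q}+1$ and there is nothing to prove. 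So I would fix a codeword $c(x)$ with $\wt(c)=d_{n,q}$ and image $(c_t(x))_t$, where $c_t\in\C_{0,\epsi_t}$.

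First I would extract rigidity from this equality. If some $c_t=0$, part 2.2) of Theorem \ref{2:thm1} gives $\wt(c)\ge 2d_{n,q}>d_{n,q}$, impossible; hence every $c_t\ne0$. Feeding $\wt(\vec{h}_z)\ge1$ and $|\supp(c_t)|\ge d_{n,q}$ into the identity $\wt(c)=\sum_{z\in\I}\wt(\vec{h}_z)$ (with $\I=\bigcup_t\supp(c_t)$) from the proof of Theorem \ref{2:thm1}, the equality $\wt(c)=d_{n,q}$ forces $|\I|=d_{n,q}$, $\wt(\vec{h}_z)=1$ for every $z\in\I$, and $\supp(c_t)=\I$ for every $t$. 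Thus all component words are minimum-weight words of their quadratic residue codes sharing a single common support $\I$ of size $d_{n,q}$. Inverting the Vandermonde relation, $\wt(\vec{h}_z)=1$ moreover means $c_{t,z}=\gamma_z\omega_z^{t}$ with $\gamma_z\ne0$ and $\omega_z$ an $r$-th root of unity.

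Now comes the square-root step. As $\vec{\epsi}$ is not constant, there are $t_1,t_2$ with $\epsi_{t_1}=1$ and $\epsi_{t_2}=-1$, so $a:=c_{t_1}\in\C_{0,1}$ and $b:=c_{t_2}\in\C_{0,-1}$ are nonzero and both supported on $\I$. Since $g_{0,1}g_{0,-1}=1+x+\cdots+x^{n-1}$ and the ideal generated by $1+x+\cdots+x^{n-1}$ in $\F[x]/(x^n-1)$ is one-dimensional, $ab\equiv\mu\,(1+x+\cdots+x^{n-1})\pmod{x^n-1}$ with $\mu n=a(1)b(1)$. But $ab$ is supported on the sumset $\{\alpha+\beta\bmod n:\alpha,\beta\in\I\}$, whose size is at most $\binom{|\I|+1}{2}=\binom{d_{n,q}+1}{2}$. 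Provided $\mu\ne0$ the left-hand side has full support $n$, so $\binom{d_{n,q}+1}{2}\ge n$, which is exactly the desired bound. It is essential here that $a$ and $b$ share the support $\I$: this is what replaces the ordered bound $d_{n,q}^2$ by the triangular bound $\binom{d_{n,q}+1}{2}$ matching $\frac{\sqrt{8n+1}-1}{2}$.

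The hard part is the degenerate case $\mu=0$, i.e.\ $a(1)=0$ or $b(1)=0$, where $ab\equiv0$ and the count is vacuous. To close it I would prove that the pair $(t_1,t_2)$ can be chosen with $a(1)\ne0\ne b(1)$. With $c_t(1)=\sum_{z\in\I}\gamma_z\omega_z^{t}$, the only way to fail for every admissible pair is that $c_t(1)=0$ for all $t$ with $\epsi_t=1$, or for all $t$ with $\epsi_t=-1$; I would exclude this using that these index sets are unions of $q$-cyclotomic cosets modulo $r$ (the defining property of $\Lambda$) together with a Vandermonde/uncertainty estimate for $t\mapsto\sum_{z}\gamma_z\omega_z^{t}$ on $\Z/r\Z$. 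In the extreme subcase where $c$ vanishes at every $\theta_t$, equivalently $(x^r-1)\mid c(x)$, I expect to reach a contradiction with $\wt(c)=d_{n,q}$ directly. This even-like situation is the genuine obstacle, and it is exactly the point at which the classical square-root bound must be restricted to odd-like codewords.
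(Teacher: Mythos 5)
Your treatment of parts 1) and 2), the rigidity analysis for part 3) (all $c_t\ne 0$, common support $\I$ with $\#\I=d_{n,q}$, each $c_t$ a minimum-weight codeword of its quadratic residue code), and the product--sumset step giving $n\le d_{n,q}(d_{n,q}+1)/2$ all coincide with the paper's proof. The gap is exactly the case you yourself flag as ``the genuine obstacle'': when $c_{t_1}(1)=0$ or $c_{t_2}(1)=0$. What you offer there is a plan, not a proof: you neither establish that a pair with $c_{t_1}(1)\ne 0\ne c_{t_2}(1)$ can be chosen, nor derive a contradiction in the subcase $(x^r-1)\mid c(x)$. The paper closes this point in one line by invoking a classical fact about quadratic residue codes: every minimum-weight codeword of an $[n,(n+1)/2]$ QR code of prime length $n$ is odd-like, i.e.\ $c(1)\ne 0$; equivalently, the even-like subcode $\left((x-1)g_{0,\epsi}\right)$ has minimum distance $d_{n,q}+1$ (this is the statement $\bar{d}_{n,q}=d_{n,q}+1$ recalled in Section \ref{sec4}, a consequence of the Gleason--Prange theorem on the transitive automorphism group of the extended QR code; see \cite{Huf}). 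Since your own rigidity step shows that \emph{every} component $c_t$ has weight exactly $d_{n,q}$, this fact applies to all components simultaneously: $c_t(1)\ne 0$ for every $t$, so the degenerate case never occurs and any pair $t_1,t_2$ with $\epsi_{t_1}=1$, $\epsi_{t_2}=-1$ works.

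It is also worth noting why your proposed substitute is unlikely to close the gap by elementary means. Both $\{t:\epsi_t=1\}$ and $\{t: c_t(1)=0\}$ are unions of $q$-cyclotomic cosets modulo $r$: indeed $c_t(1)=c(\theta_t)$, and $c(\theta_{qt})=c(\theta_t)^q$ because $c(x)\in\F[x]$ and $\theta_t^q=\theta_{qt}$. So the defining property of $\Lambda$ cannot by itself separate these two sets, and for composite $r$ the uncertainty bounds on $\Z/r\Z$ (Donoho--Stark type) are too weak to prevent the exponential sum $t\mapsto\sum_{z\in\I}\gamma_z\omega_z^t$ from vanishing on all of $\{t:\epsi_t=1\}$. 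In the extreme even-like subcase the sumset argument collapses entirely, since for even-like $a,b$ one gets $ab\equiv 0\pmod{x^n-1}$ and no count at all; historically this situation is resolved by the group-theoretic result above, not by counting. In short: without importing the known odd-like fact your proof is incomplete, and with it, the extra machinery you propose is unnecessary.
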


\begin{proof} 1) and 2) are direct consequences of Theorem \ref{2:thm1}, noting that for any $t$,
\[g_t(x)=\gcd\left(g_{\vec{\epsi}}(x \theta_t),x^n-1\right)=g_{0,\epsi_t}(x), \]
and
\[\C_t=(g_{0,\epsi_t}(x)) \subset \F[x]/(x^n-1), \quad d \left(\C_t\right)=d_{n,q}. \]

As for 3), suppose $d\left(\C_{\vec{\epsi}}\right)=d_{n,q}$. Let $c(x) \in \C_{\vec{\epsi}}$ with $\wt(c(x))=d_{n,q}$. Then by the proof of Theorem \ref{2:thm1}, the corresponding $\left(c_t(x)\right)_{t=0}^{r-1}$ satisfies
\begin{itemize}
\item[i)] $c_t(x) \ne 0$ for any $t$,

\item[ii)] let $\I=\bigcup\limits_{t=0}^{r-1} \supp(c_t(x))$, then $\#\I = d_{n,q}$.
\end{itemize}
So for each $t$, we have $0 \ne c_t(x)=\sum_{z \in \I}c_{t,z} x^z \in \C_{0,\epsi_t}$ with $\wt(c_t(x))=d_{n,q}$ being the minimum weight in a quadratic residue code of length $n$ and dimension $(n+1)/2$. It is known that $c_t(1) \ne 0$ for all $t$. There are $t_1,t_2$ such that
\[c_{t_1}(x) \in \C_{0,1}, \quad c_{t_2}(x) \in \C_{0,-1}. \]
So we have
\[\left.\frac{x^n-1}{x-1}\right| c_{t_1}(x)c_{t_2}(x) \Longrightarrow \wt(c_{t_1}(x)c_{t_2}(x))=n. \]
On the other hand,
\[c_{t_1}(x)c_{t_2}(x)=\sum_{z_1,z_2 \in \I} c_{t_1,z}c_{t_2,x} x^{z_1+z_2} \pmod{x^n-1}. \]
We have
\[n=\wt(c_{t_1}(x)c_{t_2}(x)) \le \# (\I+\I) \le d_{n,q}+\binom{d_{n,q}}{2}=\frac{d_{n,q}(d_{n,q}+1)}{2}. \]
This implies that
\[d_{n,q} \ge \frac{\sqrt{8n+1}-1}{2}.\]
This completes the proof of Theorem \ref{3:thm3}.
\end{proof}

\begin{exam} In examples below, we use Magma to compute the parameters of some codes for $q=2,3,4$ and a few small values $n,r$. We only consider codes which are not permutation equivalent by Theorem \ref{3:thm2}. The $^*$-symbol indicates that the code is the best among all cyclic codes, $^\checkmark$-symbol indicates that the code is the best among all linear codes, and $^\diamond$-symbol indicates that the code is optimal among all codes. For $q=2,3$, such information is obtained by consulting \cite[Appendix A: Tables of best binary and ternary cyclic codes]{Ding2}; for $q=4$ we consult the online reference \emph{http://www.codetables.de/} to check the best linear codes. There are also some sporadic examples of best cyclic codes that we exclude from the table. It demonstrates that the construction produces many best cyclic codes, and when the parameters are small, it may even produce best linear codes.
\begin{center}
\begin{tabular}{|c||c||c|| c||c|}
\hline
\hline
$q=2$&$n=7$ & $n=17$ &$n=23$ & $n=31$ \\
[0.5ex]
\hline
\hline
$r=3$&$[21,12]$ code &$[51,27]$ code  & $[69,36]$ code &$[93,48]$ code  \\
&1 code $d=5^\diamond$ &1 code $d=9^*$  & 1 code $d=11^*$ & 1 code $d=14^\checkmark$  \\
&1 code $d=3$ &1 code $d=5$  & 1 code $d=7$ & 1 code $d=7$  \\
\hline
\hline
$r=5$&$[35,20]$ code &$[85,45]$ code  &$[115,60]$ code & $[155,80]$ code  \\
&1 code $d=6^*$ &1 code $d=10$  & 1 code $d=14^*$ &1 code $d=14$  \\
&1 code $d=3$ &1 code $d=5$  & 1 code $d=7$ &1 code $d=7$  \\
\hline
\hline
$r=7$&NA &$[119,63]$ code  & $[161,84]$ code & $[217,112]$ code \\
& &1 code $d=11^*$  &2 codes $d=14$ &  NA \\
& &1 code $d=10$  &1 code $d=7$ &  \\
& &1 code $d=5$  & & \\
\hline
\hline
\end{tabular}
\label{1:t21}
\end{center}

\begin{center}
\begin{tabular}{|c||c||c|| c||c|}
\hline
\hline
$q=3$&$n=11$ & $n=13$ &$n=23$ & $n=37$ \\
[0.5ex]
\hline
\hline
$r=2$&$[22,12]$ code &$[26,14]$ code  &$[46,24]$ code &$[74,38]$ code  \\
& 1 code $d=7^\diamond$ &1 code $d=7^\checkmark$  &1 code $d=13^\checkmark$ &1 code $d=14^*$  \\
& 1 code $d=5$ &1 code $d=5$  &1 code $d=8$ &1 code $d=10$  \\
\hline
\hline
$r=4$&$[44,24]$ code &$[52,28]$ code  &$[92,48]$ code &$[148,76]$ code  \\
& 2 codes $d=8^*$ &2 codes $d=9$  &2 codes $d=14$ & NA  \\
& 1 code $d=7$ &1 code $d=7$  &1 code $d=13$ &\\
& 1 code $d=5$ &1 code $d=5$  &1 code $d=8$ & \\
\hline
\hline
\end{tabular}
\label{1:t22}
\end{center}

\begin{center}
\begin{tabular}{|c||c||c|| c||c||c|}
\hline
\hline
$q=4$&$n=5$ & $n=7$ &$n=11$ & $n=13$ & $n=17$\\
[0.5ex]
\hline
\hline
$r=3$&$[15,9]$ code &$[21,12]$ code &$[33,18]$ code &$[39,21]$ code &$[51,27]$ code\\
& 2 codes $d=5^\checkmark$ & 2 codes $d=5$  & 2 codes $d=8$ & 2 codes $d=9$  &  2 codes $d=9$\\
& 1 code $d=3$ & 1 code $d=3$  & 1 code $d=5$ & 1 code $d=5$  &  1 code $d=5$\\
\hline
\hline
\end{tabular}

\label{1:t23}
\end{center}
\end{exam}

\section{Cyclic codes from Ding's constructions} \label{sec4}

Throughout this section we assume that
\begin{itemize}
\item[4.1)] $\F$ is the finite field of order $q$,
\item[4.2)] $n_1,n_2$ are two distinct odd primes such that $\legn{q}{n_1}=\legn{q}{n_2}=1$ and $\gcd(n_1n_2,q)=1$,
\item[4.3)] $\theta$ is a primitive $n_1n_2$-th root of unity in some extension of $\F$, and $\lambda:=\theta^{n_1}$.
\end{itemize}

Ding's three constructions of cyclic codes can be described explicitly by using the three different generalized cyclotomies of order two in $\Z_{n_1n_2}^*$ (see \cite{Ding1}). However, it may look more natural to describe these constructions as follows.

\subsection{Ding's $1^{\mathrm st}$ construction}

We write
\[x^{n_1n_2}-1=\prod_{i=0}^{n_1n_2-1}\left(x-\theta^i\right)=(x-1) F^{(1)}_{1,1}(x)F^{(1)}_{1,-1}(x)F_{2,1}(x)F_{2,-1}(x)F_{3,1}(x)F_{3,-1}(x),\]
where for $\epsi \in \{\pm 1\}$,
\begin{eqnarray} \label{4:con0} F^{(1)}_{1,\epsi}(x)=\prod_{\substack{1 \le i \le n_1n_2-1\\
\legn{i}{n_1n_2}=\epsi}} \left(x-\theta^i\right),\end{eqnarray}
here $\legn{\cdot}{n_1n_2}=\legn{\cdot}{n_1} \legn{\cdot}{n_2}$ is the Jacobi symbol for $n_1n_2$, and
\begin{eqnarray}
\label{4:f2}
F_{2,\epsi}(x)&=&\prod_{\substack{1 \le i \le n_2-1\\
\legn{n_1i}{n_2}=\epsi}} \left(x-\theta^{n_1i}\right),\\
\label{4:f3}
F_{3,\epsi}(x)&=&\prod_{\substack{1 \le i \le n_1-1\\
\legn{n_2i}{n_1}=\epsi}} \left(x-\theta^{n_2i}\right). \end{eqnarray}
Since $\legn{q}{n_1}=\legn{q}{n_2}=1$, we have $F^{(1)}_{1,\epsi}(x), F_{2,\epsi} (x), F_{3,\epsi}(x) \in \F[x]$ for all $\epsi$.

Denote by $d_{n_1,q}$ the minimum distance of the code $\left(F_{3,\epsi}(x) \right) \subset \F[x]/(x^{n_1}-1)$. This is a quadratic residue code of length $n_1$ and dimension $(n_1+1)/2$ over $\F$. Denote by $\bar{d}_{n_1,q}$ the minimum distance of the code $\left((x-1)F_{3,\epsi}(x) \right) \subset \F[x]/(x^{n_1}-1)$. It is known that
\[ \bar{d}_{n_1,q}=d_{n_1,q}+1 \quad \mbox{ and } \quad d_{n_1,q} > \sqrt{n_1}.\]
We also define $d_{n_2,q}$ and $\bar{d}_{n_2,q}$ similarly.

\begin{defn}[Ding's $1^{\mathrm st}$ constructions] Assume 4.1)--4.3). For any $\vec{\epsi}:=(\epsi_1,\epsi_2,\epsi_3) \in \{\pm 1\}^3$, let
 \begin{eqnarray*} \label{4:d1g} g^{(1)}_{\vec{\epsi}}(x):=F^{(1)}_{1,\epsi_1}(x)F_{2,\epsi_2}(x)F_{3,\epsi_3}(x).\end{eqnarray*}
Then define $C^{(1)}_{\vec{\epsi}}$ to be the cyclic code of length $n_1n_2$ over $\F$ with the generator polynomial $g^{(1)}_{\vec{\epsi}}(x)$, that is,
\begin{eqnarray} \label{4:defn1} C^{(1)}_{\vec{\epsi}}=\left(g^{(1)}_{\vec{\epsi}}(x)\right) \subset \F[x]/(x^{n_1n_2}-1). \end{eqnarray}
\end{defn}
Next we state the main results on these codes $\cone{1}$.

\begin{thm}  \label{4:thm4}
\begin{itemize}
\item[i)] For any $\vepsi \in \{\pm 1\}^3$, the cyclic code $\cone{1}$ over $\F$ given in (\ref{4:defn1}) has  length $n_1n_2$ and dimension $(n_1n_2+1)/2$, satisfying
\[d\left(\cone{1}\right) > \max \left\{d_{n_1,q},d_{n_2,q}\right\} > \max \left\{\sqrt{n_1},\sqrt{n_2}\right\}. \]

\item[ii)] Write $\{\pm 1\}^3=A_1 \cup A_2$ where
\begin{eqnarray*} A_1&=&\{(1,1,1),(-1,-1,1),(-1,1,-1),(1,-1,-1)\}, \\
A_2&=&\{(-1,-1,-1),(1,1,-1),(1,-1,1),(-1,1,1)\}.\end{eqnarray*}
Then for $i=1$ or $2$, the codes $\cone{1}$ with $\vepsi \in A_i$ are all permutation equivalent.
\end{itemize}
\end{thm}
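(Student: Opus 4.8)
The plan is to derive both parts from the decomposition of Theorem~\ref{2:thm1}. I would apply that theorem with $n=n_1$, $r=n_2$ (the hypotheses $\gcd(n_1n_2,q)=\gcd(n_1,n_2)=1$ hold, and if $n_2\nmid(q-1)$ one passes to a field extension via the Remark after Theorem~\ref{2:thm1}, which preserves weights). This exhibits $\cone{1}$ as permutation equivalent to a product $\prod_{t=0}^{n_2-1}\C_t$ with $\C_t=(g_t(x))\subset\F[x]/(x^{n_1}-1)$ and $g_t(x)=\gcd\!\big(\gone{1}(x\theta_t),x^{n_1}-1\big)$, where $\theta_t=\lambda^{\bar n_1 t}$ and $\bar n_1 n_1\equiv 1\pmod{n_2}$. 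The dimension claim is a degree count: $\deg F^{(1)}_{1,\epsi}=(n_1-1)(n_2-1)/2$, $\deg F_{2,\epsi}=(n_2-1)/2$, $\deg F_{3,\epsi}=(n_1-1)/2$, so $\deg\gone{1}=(n_1n_2-1)/2$ and $\dim\cone{1}=(n_1n_2+1)/2$.

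The first real step is to identify each component $\C_t$. Writing $\mu=\theta^{n_2}$, a primitive $n_1$-th root of unity, one has that $\mu^j$ is a root of $g_t$ iff $\mu^j\theta_t=\theta^{n_2 j+n_1\bar n_1 t}$ is a root of $\gone{1}$. A short residue computation shows that for every $t$ the nonzero-$j$ roots form a quadratic-residue set (coming from $F_{3,\epsi_3}$ when $t=0$ and from $F^{(1)}_{1,\epsi_1}$ when $t\ne0$), so the associated generator is a quadratic-residue generator of length $n_1$; moreover $\mu^0=1$ is a root of $g_t$ iff $\theta_t$ is a root of $F_{2,\epsi_2}$, that is, iff $t\ne0$ and $\legn{t}{n_2}=\epsi_2$. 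Consequently $\C_t$ is the quadratic residue code of length $n_1$ with $d(\C_t)=d_{n_1,q}$ when $1$ is not a root (in particular for $t=0$), and the augmented code $((x-1)\cdot(\text{QR generator}))$ with $d(\C_t)=\bar d_{n_1,q}=d_{n_1,q}+1$ when $1$ is a root. I expect this root bookkeeping to be the main obstacle, since one must track the Legendre and Jacobi symbols modulo $n_1$, $n_2$ and $n_1n_2$ simultaneously.

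For part~i) the bound $d(\cone{1})\ge\min_t d(\C_t)=d_{n_1,q}$ from Theorem~\ref{2:thm1}(4) is not strict, since $\C_0$ is always a plain quadratic residue code, so I would sharpen it. The set $T=\{t\ne0:\legn{t}{n_2}=\epsi_2\}$ of augmented indices is nonempty because $\#T=(n_2-1)/2\ge 1$. Take a minimum-weight $c\in\cone{1}$ with components $(c_t)$. If $c_{t^*}\ne0$ for some $t^*\in T$, then $\wt(c)\ge\wt(c_{t^*})\ge d_{n_1,q}+1$, using the pointwise inequality $\wt(c)\ge\wt(c_t)$ from the proof of Theorem~\ref{2:thm1}(2). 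Otherwise $c_t=0$ for all $t\in T$, so some component vanishes and Theorem~\ref{2:thm1}(2.2) gives $\wt(c)\ge 2\min\{\wt(c_t):c_t\ne0\}\ge 2d_{n_1,q}\ge d_{n_1,q}+1$. In either case $d(\cone{1})>d_{n_1,q}$; the same argument with $n=n_2$, $r=n_1$ yields $d(\cone{1})>d_{n_2,q}$, and the known bound $d_{n_i,q}>\sqrt{n_i}$ completes i).

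For part~ii) I would use the multiplier permutations $\sigma_a\colon c(x)\mapsto c(x^a)$ for $\gcd(a,n_1n_2)=1$, which are permutation equivalences of $\F[x]/(x^{n_1n_2}-1)$. A zero $\theta^i$ of the generator is carried to $\theta^{a^{-1}i}$, and tracking the three factors $F^{(1)}_{1,\epsi_1}$, $F_{2,\epsi_2}$, $F_{3,\epsi_3}$ shows $\sigma_a(\cone{1})=\C^{(1)}_{(\alpha\beta\epsi_1,\,\beta\epsi_2,\,\alpha\epsi_3)}$, where $\alpha=\legn{a}{n_1}$ and $\beta=\legn{a}{n_2}$. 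By the Chinese Remainder Theorem $(\alpha,\beta)$ can be any element of $\{\pm1\}^2$, so the reachable sign changes form the group $\{(1,1,1),(-1,-1,1),(-1,1,-1),(1,-1,-1)\}=A_1$ acting on $\{\pm1\}^3$ by coordinatewise multiplication. This group preserves the product $\epsi_1\epsi_2\epsi_3$, its orbit of $(1,1,1)$ is exactly $A_1$ and its orbit of $(-1,-1,-1)$ is exactly $A_2$; hence all codes with $\vepsi$ in a common $A_i$ are permutation equivalent, which proves ii) and the asserted equality of minimum distances.
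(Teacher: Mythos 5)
Your proposal is correct and follows essentially the same route as the paper's own proof: both apply Theorem~\ref{2:thm1} with $n=n_1$, $r=n_2$ (and then with the roles of $n_1,n_2$ swapped), identify the component codes $\C_t$ as plain or augmented quadratic residue codes of length $n_1$, and prove ii) via multiplier permutations $x\mapsto x^a$ whose action on the signs $(\epsi_1,\epsi_2,\epsi_3)$ is multiplication by $(\alpha\beta,\beta,\alpha)$. Your two-case argument for the strict inequality $d\left(\cone{1}\right)>d_{n_1,q}$ (either a nonzero component lies in an augmented code of distance $d_{n_1,q}+1$, or else some component vanishes and part 2.2) of Theorem~\ref{2:thm1} gives $\wt(c)\ge 2d_{n_1,q}$) is precisely the ``closer look'' that the paper asserts but does not spell out.
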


\begin{proof}

i). Obviously the code $C^{(1)}_{\vec{\epsi}}$ is of length $n_1n_2$ and dimension $(n_1n_2+1)/2$. To study the minimum distance, we may assume that $n_1n_2|(q-1)$.

Applying Theorem \ref{2:thm1} for $n=n_1$ and $r=n_2$, we find that $C^{(1)}_{\vec{\epsi}}$ is permutation equivalent to
\[\phi\left(C^{(1)}_{\vec{\epsi}}\right)=\left\{\left(\sum_{t=0}^{n_2-1}c_t(x) \lambda^{-tk}\right)_{k=0}^{n_2-1}: c_t(x) \in \C_t \, \,\, \forall \, 0 \le t \le n_2-1\right\}, \]
where for each $t$, $\C_t=(g_t(x)) \subset \F[x]/(x^{n_1}-1)$ is a cyclic code with
\begin{eqnarray*} \label{4:gt1} g_t(x)=\gcd \left(\gone{1}(x \theta_t),x^{n_1}-1\right)=\left\{\begin{array}{lll}g_{0,\epsi_3}(x)&:& t=0\\
(x-1)g_{0,\epsi_1\epsi_2}(x)&:& \legn{t}{n_2}=\epsi_2\\
g_{0,-\epsi_1\epsi_2}(x)&:& \legn{t}{n_2}=-\epsi_2.
\end{array}\right.\end{eqnarray*}
Here $\theta_t=\lambda^{\bar{n}_1t}$ where $\bar{n}_1$ is a positive integer such that $n_1 \bar{n}_1 \equiv 1 \pmod{n_2}$, and $g_{0,\epsi}(x)=F_{3,\epsi}(x)$ for any $\epsi$. It implies from Theorem \ref{2:thm1} that $d\left(\cone{1}\right) \ge d_{n_1,q} \ge \sqrt{n_1}$. A closer look from Theorem \ref{2:thm1} easily shows that the equality ``='' can not be achieved, hence we have $d\left(\cone{1}\right) > d_{n_1,q}$.

On the other hand, we may also apply Theorem \ref{2:thm1} for $n=n_2$ and $r=n_1$ to obtain $d\left(\cone{1}\right) > d_{n_2,q} > \sqrt{n_2}$. Thus we obtain
\[d\left(\cone{1}\right) > \max \left\{d_{n_1,q},d_{n_2,q}\right\} > \max \left\{\sqrt{n_1},\sqrt{n_2}\right\}. \]

ii). For any positive integer $u$ such that $\gcd(u,n_1n_2)=1$, the map $x \mapsto x^u$ induces a permutation equivalence $\phi_u: \F[x]/(x^{n_1n_2}-1) \to \F[x]/(x^{n_1n_2}-1)$. Choose two positive integers $a,b$ such that
\[\legn{a}{n_1}=-1, \quad \legn{a}{n_2}=1, \quad \legn{b}{n_1}=1, \quad \legn{b}{n_2}=-1,\]
and for any $\vepsi =(\epsi_1,\epsi_2,\epsi_3) \in \{\pm 1\}$, let
\[ \vepsi^{(a)}:=(-\epsi_1,\epsi_2,-\epsi_3), \quad \vepsi^{(b)}:=(-\epsi_1,-\epsi_2,\epsi_3).\]
It is easy to see that
\[\phi_a \left(C^{(1)}_{\vec{\epsi}}\right)= C^{(1)}_{\vec{\epsi}^{(a)}}, \quad \phi_b \left(C^{(1)}_{\vec{\epsi}}\right)= C^{(1)}_{\vec{\epsi}^{(b)}}. \]
So the codes $\cone{1}, \ctwo{1}{a}$ and $\ctwo{1}{b}$ are all permutation equivalent.

Considering these two actions $\vepsi \mapsto \vepsi^{(a)}$ and $\vepsi \mapsto \vepsi^{(b)}$, it is easy to check that the codes $\cone{1}$ are permutation equivalent for all $\vepsi \in A_i$ where $i=1$ or $2$. This completes the proof of Theorem \ref{4:thm4}.

\end{proof}

\subsection{Ding's $2^{\mathrm nd}$ construction}

We write
\[x^{n_1n_2}-1=\prod_{i=0}^{n_1n_2-1}\left(x-\theta^i\right)=(x-1) F^{(2)}_{1,1}(x)F^{(2)}_{1,-1}(x)F_{2,1}(x)F_{2,-1}(x)F_{3,1}(x)F_{3,-1}(x),\]
where for any $\epsi \in \{\pm 1\}$,
\begin{eqnarray}
\label{4:con1}
F^{(2)}_{1,\epsi}(x)&=&\prod_{\substack{1 \le i \le n_1n_2-1\\
\legn{i}{n_1}=\epsi\\
\gcd(i,n_2)=1}} \left(x-\theta^i\right),
\end{eqnarray}
and the functions $F_{2,\epsi}(x)$ and $F_{3,\epsi}(x)$ are defined in (\ref{4:f2}) and (\ref{4:f3}) respectively.

\begin{defn}[Ding's $2^{\mathrm nd}$ constructions] Assume 4.1)--4.3). For any $\vec{\epsi}:=(\epsi_1,\epsi_2,\epsi_3) \in \{\pm 1\}^3$, let
 \begin{eqnarray*} \label{4:d2g} g^{(2)}_{\vec{\epsi}}(x):=F^{(2)}_{1,\epsi_1}(x)F_{2,\epsi_2}(x)F_{3,\epsi_3}(x).\end{eqnarray*}
Then define $C^{(2)}_{\vec{\epsi}}$ to be the cyclic code of length $n_1n_2$ over $\F$ with the generator polynomial $g^{(2)}_{\vec{\epsi}}(x)$, that is,
\begin{eqnarray} \label{4:defn2}  C^{(2)}_{\vec{\epsi}}=\left(g^{(2)}_{\vec{\epsi}}(x)\right) \subset \F[x]/(x^{n_1n_2}-1). \end{eqnarray}
\end{defn}
Next we state the main results on these codes $\cone{2}$.

\begin{thm}  \label{4:thm5}
\begin{itemize}
\item[i)] For any $\vepsi \in \{\pm 1\}^3$, the cyclic code $\cone{2}$ over $\F$ given in (\ref{4:defn2}) has length $n_1n_2$ and dimension $(n_1n_2+1)/2$, satisfying
\[d\left(\cone{2}\right) > d_{n_1,q} > \sqrt{n_1}. \]

\item[ii)] Write $\{\pm 1\}^3=A_1 \cup A_2$ where
\begin{eqnarray*} A_1&=&\{(1,1,1),(1,-1,1),(-1,1,-1),(-1,-1,-1)\}, \\
A_2&=&\{(1,1,-1),(1,-1,-1),(-1,1,1),(-1,-1,1)\}. \end{eqnarray*}
Then for $i=1$ or $2$, the codes $\cone{2}$ for $\vepsi \in A_i$ are all permutation equivalent.

\item[iii)] Taking $\vepsi_1=(1,1,1) \in A_1$, then
\[ d\left(\C^{(2)}_{\vepsi_1}\right) = {d}_{n_1,q}+1.\]

\item[iv)] If $n_1 \equiv n_2 \equiv -1 \pmod{4}$, and $q$ is a prime power such that $q=2^m$ or $q \equiv 1 \pmod{4}$, let $\gamma \in \F$ such that
\begin{eqnarray} \label{4:duad} 1+\gamma^2 n_1n_2=0. \end{eqnarray}
Then for any $\vepsi \in \{\pm 1\}^3$, the code $\bar{\mathcal{C}}_{\vepsi}^{(2)}$ given by
\[\bar{\mathcal{C}}_{\vepsi}^{(2)}:=\left\{(c_0,\ldots,c_{n_1n_2-1},c_{\infty}): c_{\infty}=-\gamma \, c(1), \, \, c(x)=\sum_{i=0}^{n_1n_2-1} c_ix^i\in \cone{2} \right\}\]
has parameters $\left[n_1n_2+1,\frac{n_1n_2+1}{2}\right]$ and is self-dual.
\end{itemize}
\end{thm}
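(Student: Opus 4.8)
The plan for parts (i)--(iii) is to run Theorem \ref{2:thm1} with $n=n_1$ and $r=n_2$ (after extending scalars so that $n_2\mid(q-1)$, which is harmless by the Remark following Theorem \ref{2:thm1}) and to identify the component codes $\C_t$, exactly as in the proof of Theorem \ref{4:thm4}. I would compute $g_t(x)=\gcd\!\left(\gone{2}(x\theta_t),\,x^{n_1}-1\right)$ by checking, for each $n_1$-th root of unity $\theta^{n_2 j}$, whether $\theta^{n_2 j}\theta_t=\theta^{\,n_2 j+n_1\bar{n}_1 t}$ lies in the zero set of $\gone{2}$. Setting $m\equiv n_2 j+n_1\bar{n}_1 t$, one has $m\equiv n_2 j\pmod{n_1}$ and $m\equiv t\pmod{n_2}$, so only two Legendre symbols need tracking. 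The decisive feature of Ding's second construction is that $F^{(2)}_{1,\epsi_1}$ is cut out by $\legn{\cdot}{n_1}$ alone (not by the Jacobi symbol), which is precisely why the $j\neq 0$ part equals $g_{0,\epsi_1}(x)=F_{3,\epsi_1}(x)$ \emph{independently} of $\legn{t}{n_2}$. I expect to obtain
\[
g_t(x)=\begin{cases}
g_{0,\epsi_3}(x), & t=0,\\
(x-1)\,g_{0,\epsi_1}(x), & \legn{t}{n_2}=\epsi_2,\\
g_{0,\epsi_1}(x), & \legn{t}{n_2}=-\epsi_2,
\end{cases}
\]
so each $\C_t$ is either the quadratic residue code of length $n_1$ (minimum distance $d_{n_1,q}$) or its even-like subcode $\left((x-1)g_{0,\epsi_1}\right)$ (minimum distance $\bar{d}_{n_1,q}=d_{n_1,q}+1$).

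Part (i) then follows from the sharpened weight estimate in Theorem \ref{2:thm1}. Assume toward a contradiction that $\cone{2}$ has a codeword $c$ of weight $d_{n_1,q}$, with components $(c_t)_{t=0}^{n_2-1}$. If some $c_t=0$, item~2.2) of Theorem \ref{2:thm1} already gives $\wt(c)\ge 2d_{n_1,q}>d_{n_1,q}$; hence every $c_t\neq 0$. But $n_2\ge 3$ leaves at least one index $t$ with $\legn{t}{n_2}=\epsi_2$, and for such $t$ the word $c_t$ is a nonzero element of the even-like subcode, so $\wt(c_t)\ge d_{n_1,q}+1$. Since $\supp(c_t)\subseteq\I:=\bigcup_s\supp(c_s)$ and each of the $\#\I$ terms in the identity $\wt(c)=\sum_{z\in\I}\wt(\vec{h}_z)$ from the proof of Theorem \ref{2:thm1} is nonzero (so $\wt(c)\ge\#\I$), this yields $\wt(c)\ge d_{n_1,q}+1$, a contradiction; thus $d(\cone{2})>d_{n_1,q}>\sqrt{n_1}$. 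For part (ii) I would mirror Theorem \ref{4:thm4}(ii): the multipliers $x\mapsto x^u$ with $\gcd(u,n_1n_2)=1$ are permutation equivalences, and choosing $a,b$ with $\legn{a}{n_1}=-1,\ \legn{a}{n_2}=1$ and $\legn{b}{n_1}=1,\ \legn{b}{n_2}=-1$, a direct reading of the defining set of $\gone{2}$ shows these maps send $\cone{2}$ to $\ctwo{2}{a}$ and $\ctwo{2}{b}$ with $\vepsi^{(a)}=(-\epsi_1,\epsi_2,-\epsi_3)$ and $\vepsi^{(b)}=(\epsi_1,-\epsi_2,\epsi_3)$. Here the sign $\epsi_1$ is \emph{unchanged} by the second map, precisely because $F^{(2)}_{1,\epsi_1}$ ignores $\legn{\cdot}{n_2}$; this is the only divergence from Theorem \ref{4:thm4}. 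A short check then shows the group generated by these two involutions has orbits exactly $A_1$ and $A_2$.

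For part (iii) with $\vepsi_1=(1,1,1)$ the decomposition collapses so that $\C_0$ and every $\C_t$ with $\legn{t}{n_2}=-1$ equal the quadratic residue code $(g_{0,1})$, while the remaining $\C_t$ equal its even-like subcode $\left((x-1)g_{0,1}\right)$. The lower bound $d\ge d_{n_1,q}+1$ is just part (i). For the matching upper bound I would invoke item~2.1) of Theorem \ref{2:thm1}: pick a minimum-weight word $v\in\left((x-1)g_{0,1}\right)$, so $\wt(v)=\bar{d}_{n_1,q}=d_{n_1,q}+1$; since $\left((x-1)g_{0,1}\right)\subseteq\C_t$ for every $t$, the element of $\phi(\cone{2})$ with all components equal to $v$ pulls back to a codeword of weight exactly $\wt(v)$. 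Hence $d\!\left(\C^{(2)}_{\vepsi_1}\right)=d_{n_1,q}+1$.

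Part (iv) is structurally different, and this is where $n_1\equiv n_2\equiv-1\pmod 4$ and the hypothesis on $q$ are used. The engine is the multiplier $\mu_{-1}:i\mapsto-i$. Since $n_1\equiv n_2\equiv-1\pmod 4$ forces $\legn{-1}{n_1}=\legn{-1}{n_2}=-1$, negation flips all three Legendre signs, so $\mu_{-1}$ carries the zero set of $\gone{2}$ to that of $g^{(2)}_{-\vepsi}$. Feeding this into the standard description of the Euclidean dual of a cyclic code (its defining set is the negation of the complement of the defining set of $\cone{2}$), I expect the key structural identity
\[
\left(\cone{2}\right)^{\perp}=\left\{c\in\cone{2}:c(1)=0\right\}=\left((x-1)\gone{2}\right).
\]
In particular $\langle a,b\rangle=0$ whenever $a,b\in\cone{2}$ and $b(1)=0$. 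Now the inner product of two extended words is the symmetric bilinear form $B(a,b):=\langle a,b\rangle+\gamma^2 a(1)b(1)$ on $\cone{2}$; by the identity above $B$ vanishes as soon as $a(1)=0$ or $b(1)=0$, so it descends to the one-dimensional quotient $\cone{2}/\{c(1)=0\}$ and must equal $B(a,b)=\kappa\,a(1)b(1)$ for a single scalar $\kappa$. Testing on the all-ones word $\mathbf{1}(x)=(x^{n_1n_2}-1)/(x-1)\in\cone{2}$, for which $\mathbf{1}(1)=n_1n_2$ and $\langle\mathbf{1},\mathbf{1}\rangle=n_1n_2$, gives $B(\mathbf{1},\mathbf{1})=n_1n_2\left(1+\gamma^2 n_1n_2\right)=0$ by (\ref{4:duad}); as $n_1n_2\neq 0$ in $\F$ this forces $\kappa=0$, hence $B\equiv 0$ and $\bar{\mathcal{C}}^{(2)}_{\vepsi}$ is self-orthogonal. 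Since $c\mapsto(c,-\gamma c(1))$ is injective, $\bar{\mathcal{C}}^{(2)}_{\vepsi}$ has length $n_1n_2+1$ and dimension $(n_1n_2+1)/2$, exactly half its length, so self-orthogonality upgrades to self-duality. I anticipate the main obstacle throughout to be the zero-set bookkeeping: both the $g_t$-decomposition and the dual identity $\left(\cone{2}\right)^{\perp}=\left((x-1)\gone{2}\right)$ rest on correctly tracking how the substitutions $\theta^{i}\mapsto\theta^{\,n_2 j+n_1\bar{n}_1 t}$, $x\mapsto x^u$, and $i\mapsto-i$ permute the six factors $F^{(2)}_{1,\pm1},F_{2,\pm1},F_{3,\pm1}$; once these are pinned down, the weight inequalities, the orbit computation, and the all-ones evaluation are routine.
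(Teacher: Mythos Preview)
Your proof of parts (i)--(iii) follows the paper's approach essentially verbatim: the same application of Theorem~\ref{2:thm1} with $n=n_1$, $r=n_2$, the same computation of $g_t(x)$, the same multiplier maps $\phi_a,\phi_b$ for permutation equivalence, and the same ``all components equal'' construction for the upper bound in (iii). You have actually supplied more detail than the paper does for the strict inequality in (i); the paper simply asserts that ``a closer look'' at Theorem~\ref{2:thm1} rules out equality, whereas you spell out the $\wt(c)\ge\#\I\ge\wt(c_t)\ge d_{n_1,q}+1$ chain.

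Part (iv) is where you diverge. The paper does not argue directly: it observes (citing \cite{Ding1}) that $\cone{2}$ is a duadic code, notes that under $n_1\equiv n_2\equiv -1\pmod 4$ the multiplier $\mu_{-1}$ gives a splitting, and then invokes the standard result \cite[(6.11), p.~226]{Huf} that the extended code of a duadic code split by $\mu_{-1}$ is self-dual once the norm equation~(\ref{4:duad}) has a solution. Your argument is genuinely different and more self-contained: you compute directly that $\left(\cone{2}\right)^{\perp}=\bigl((x-1)\gone{2}\bigr)$ via the defining-set description of the dual, then reduce the extended inner product $B(a,b)=\langle a,b\rangle+\gamma^2 a(1)b(1)$ to a scalar multiple of $a(1)b(1)$ and kill that scalar by evaluating on the all-ones word. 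This is essentially a bare-hands reproof of the relevant piece of duadic-code theory specialized to this situation. The paper's route is shorter if one already has \cite{Huf} at hand; yours avoids the external reference and makes the role of equation~(\ref{4:duad}) completely transparent.
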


\begin{proof}

i). Obviously the code $\C^{(2)}_{\vec{\epsi}}$ is of length $n_1n_2$ and dimension $(n_1n_2+1)/2$. To study the minimum distance, we may assume that $n_1n_2|(q-1)$.

Applying Theorem \ref{2:thm1} for $n=n_1$ and $r=n_2$, we find that $C^{(2)}_{\vec{\epsi}}$ is permutation equivalent to
\[\phi\left(C^{(2)}_{\vec{\epsi}}\right)=\left\{\left(\sum_{t=0}^{n_2-1}c_t(x) \lambda^{-tk}\right)_{k=0}^{n_2-1}: c_t(x) \in \C_t \, \forall \, 0 \le t \le n_2-1\right\}, \]
where for each $t$, $\C_t=(g_t(x)) \subset \F[x]/(x^{n_1}-1)$ is a cyclic code with
\begin{eqnarray*} \label{4:gt21} g_t(x)=\gcd \left(\gone{1}(x \theta_t),x^{n_1}-1\right)=\left\{\begin{array}{lll}g_{0,\epsi_3}(x)&:& t=0\\
(x-1)g_{0,\epsi_1}(x)&:& \legn{t}{n_2}=\epsi_2\\
g_{0,\epsi_1}(x)&:& \legn{t}{n_2}=-\epsi_2.
\end{array}\right.\end{eqnarray*}
It implies from Theorem \ref{2:thm1} that $d\left(\cone{1}\right) > d_{n_1,q} > \sqrt{n_1}$.

ii). The permutation equivalence of $\cone{2}$ can be proved in exactly the same way as that of $\cone{1}$, hence we omit the details. We are contented by stating that for any $\vepsi=(\epsi_1,\epsi_2,\epsi_3) \in \{\pm 1\}^3$, define
\[\vepsi^{(a)}:=(-\epsi_1,\epsi_2,-\epsi_3), \quad \vepsi^{(b)}:=(\epsi_1,-\epsi_2,\epsi_3),\]
then the codes $\cone{2}$, $\ctwo{2}{a}$ and $\ctwo{2}{b}$ are all permutation equivalent. Considering these two actions $\vepsi \mapsto \vepsi^{(a)}$ and $\vepsi \mapsto \vepsi^{(b)}$, ii) is easily verified.

iii) Let $\vepsi=(1,1,1)$. Applying Theorem \ref{2:thm1} for $n=n_1$ and $r=n_2$, we find that $C^{(2)}_{\vec{\epsi}}$ is permutation equivalent to
\[\phi\left(C^{(2)}_{\vec{\epsi}}\right)=\left\{\left(\sum_{t=0}^{n_2-1}c_t(x) \lambda^{-tk}\right)_{k=0}^{n_2-1}: c_t(x) \in \C_t \, \forall \, 0 \le t \le n_2-1\right\}, \]
where for each $t$, $\C_t=(g_t(x)) \subset \F[x]/(x^{n_1}-1)$ is a cyclic code with
\begin{eqnarray*} \label{4:gt22} g_t(x)=\gcd \left(\gone{2}(x \theta_t),x^{n_1}-1\right)=\left\{\begin{array}{lll}g_{0,1}(x)&:& t=0\\
(x-1)g_{0,1}(x)&:& \legn{t}{n_2}=1\\
g_{0,1}(x)&:& \legn{t}{n_2}=-1.
\end{array}\right.\end{eqnarray*}
We may choose $c'(x) \in \left((x-1)g_{0,1}(x)\right) \subset \F[x]/(x^{n_1}-1)$ with $\wt(c'(x))=\bar{d}_{n_1,q}=d_{n,q}+1$, and let $c_t(x)=c'(x)$ for all $t$. The corresponding codeword $c(x) \in \cone{2}$ satisfies $\wt(c(x))=\bar{d}_{n_1,q}$. Now iii) is proved.

iv) It was noted in \cite{Ding1} that $\cone{2}$ is a duadic code. When $n_1 \equiv n_2 \equiv -1 \pmod{4}$, then $\mu_{-1}$ defines a splitting of $\cone{2}$, under the condition the equation (\ref{4:duad}) is always solvable for $\gamma \in \F$, hence (see (6.11) of page 226 in \cite{Huf}) the extended code $\bar{\mathcal C}_{\vepsi}^{(2)}$ is self-dual. This completes the proof of Theorem \ref{4:thm5}.
\end{proof}

\subsection{Ding's $3^{\mathrm rd}$ construction}
We write
\[x^{n_1n_2}-1=\prod_{i=0}^{n_1n_2-1}\left(x-\theta^i\right)=(x-1) F^{(3)}_{1,1}(x)F^{(3)}_{1,-1}(x)F_{2,1}(x)F_{2,-1}(x)F_{3,1}(x)F_{3,-1}(x),\]
where for any $\epsi \in \{\pm 1\}$,
\begin{eqnarray}
\label{4:con2}
F^{(3)}_{1,\epsi}(x)&=&\prod_{\substack{1 \le i \le n_1n_2-1\\
\legn{i}{n_2}=\epsi\\
\gcd(i,n_1)=1}} \left(x-\theta^i\right),
\end{eqnarray}
and the functions $F_{2,\epsi}(x)$ and $F_{3,\epsi}(x)$ are defined in (\ref{4:f2}) and (\ref{4:f3}) respectively.

\begin{defn}[Ding's $3^{\mathrm rd}$ constructions] Assume 4.1)--4.3). For any $\vec{\epsi}:=(\epsi_1,\epsi_2,\epsi_3) \in \{\pm 1\}^3$, let
 \begin{eqnarray*} \label{4:d3g} g^{(3)}_{\vec{\epsi}}(x):=F^{(3)}_{1,\epsi_1}(x)F_{2,\epsi_2}(x)F_{3,\epsi_3}(x).\end{eqnarray*}
Then define $C^{(3)}_{\vec{\epsi}}$ to be the cyclic code of length $n_1n_2$ over $\F$ with the generator polynomial $g^{(3)}_{\vec{\epsi}}(x)$, that is,
\begin{eqnarray} \label{4:defn3} C^{(3)}_{\vec{\epsi}}=\left(g^{(3)}_{\vec{\epsi}}(x)\right) \subset \F[x]/(x^{n_1n_2}-1). \end{eqnarray}
\end{defn}

By switching the roles $n_1 \leftrightarrow n_2$, it is easy to see that Ding's $3^{\mathrm rd}$ construction is the same as Ding's $2^{\mathrm nd}$ construction, hence the properties of the codes $\cone{3}$ follow directly from Theorem \ref{4:thm4}. For the sake of completeness, we state the results explicitly below.
\begin{thm}  \label{4:thm6}
\begin{itemize}
\item[i)] For any $\vepsi \in \{\pm 1\}^3$, the cyclic code $\cone{3}$ over $\F$ given in (\ref{4:defn3}) has length $n_1n_2$ and dimension $(n_1n_2+1)/2$, satisfying
\[d\left(\cone{3}\right) > d_{n_2,q} > \sqrt{n_2}. \]

\item[ii)] Write $\{\pm 1\}^3=A_1 \cup A_2$ where
\begin{eqnarray*} A_1&=&\{(1,1,1),(1,1,-1),(-1,-1,1),(-1,-1,-1)\}, \\
A_2&=&\{(1,-1,1),(1,-1,-1),(-1,1,1),(-1,1,-1)\}. \end{eqnarray*}
Then for $i=1$ or $2$, the codes $\cone{3}$ for $\vepsi \in A_i$ are all permutation equivalent.

\item[iii)] Taking $\vepsi_1=(1,1,1) \in A_1$, then
\[d\left(\C^{(3)}_{\vepsi_1}\right) = d_{n_2,q}+1. \]

\item[iv)] If $n_1 \equiv n_2 \equiv -1 \pmod{4}$, and $q$ is a prime power such that $q=2^m$ or $q \equiv 1 \pmod{4}$, let $\gamma \in \F$ such that
\[1+\gamma^2 n_1n_2=0. \]
Then for any $\vepsi \in \{\pm 1\}^3$, the code $\bar{\mathcal{C}}_{\vepsi}^{(3)}$ given by
\[\bar{\mathcal{C}}_{\vepsi}^{(3)}:=\left\{(c_0,\ldots,c_{n_1n_2-1},c_{\infty}): c_{\infty}=-\gamma \, c(1), \, \, c(x)=\sum_{i=0}^{n_1n_2-1} c_ix^i\in \cone{3} \right\}\]
has parameters $\left[n_1n_2+1,\frac{n_1n_2+1}{2}\right]$ and is self-dual.

\end{itemize}
\end{thm}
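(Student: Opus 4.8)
The plan is to reduce Theorem~\ref{4:thm6} entirely to the already-proven Theorem~\ref{4:thm5} by exploiting the symmetry between the two primes $n_1$ and $n_2$. The key observation is structural: Ding's third construction is obtained from the second by swapping $n_1 \leftrightarrow n_2$. Indeed, comparing the defining polynomials, $F^{(3)}_{1,\epsi}(x)$ in (\ref{4:con2}) selects indices $i$ with $\legn{i}{n_2}=\epsi$ and $\gcd(i,n_1)=1$, which is exactly $F^{(2)}_{1,\epsi}(x)$ from (\ref{4:con1}) with the roles of the two primes interchanged; simultaneously the polynomials $F_{2,\epsi}(x)$ and $F_{3,\epsi}(x)$ defined in (\ref{4:f2}) and (\ref{4:f3}) are swapped under $n_1 \leftrightarrow n_2$. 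Thus I would first state precisely that the map exchanging $n_1$ and $n_2$ carries the setup 4.1)--4.3) to an equivalent setup and identifies $\cone{3}$ (in the original labelling) with $\cone{2}$ (in the swapped labelling), up to the obvious relabelling $(\epsi_1,\epsi_2,\epsi_3) \mapsto (\epsi_1,\epsi_3,\epsi_2)$ that accounts for the transposition of $F_{2}$ and $F_{3}$.

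Granting this dictionary, each of the four parts transfers mechanically. For part i), Theorem~\ref{4:thm5}i) gives $d(\cone{2}) > d_{n_1,q} > \sqrt{n_1}$ in the swapped picture, and applying the swap back replaces $n_1$ by $n_2$, yielding $d(\cone{3}) > d_{n_2,q} > \sqrt{n_2}$. For part iii), the same substitution turns the bound $d(\C^{(2)}_{\vepsi_1}) = d_{n_1,q}+1$ into $d(\C^{(3)}_{\vepsi_1}) = d_{n_2,q}+1$, with $\vepsi_1 = (1,1,1)$ fixed by the relabelling. For part iv), the self-duality argument is entirely symmetric in $n_1$ and $n_2$: the hypotheses $n_1 \equiv n_2 \equiv -1 \pmod 4$, the condition on $q$, and the equation $1 + \gamma^2 n_1 n_2 = 0$ are all invariant under the swap, and the extended code $\bar{\mathcal{C}}^{(3)}_{\vepsi}$ is defined by the identical formula, so the duadic splitting by $\mu_{-1}$ and the resulting self-duality carry over verbatim.

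Part ii) requires slightly more care because the explicit partition $\{\pm 1\}^3 = A_1 \cup A_2$ is stated differently from that in Theorem~\ref{4:thm5}. Here I would not merely invoke the swap but redo the short permutation-equivalence computation directly, as in the proof of Theorem~\ref{4:thm4}ii). Choosing integers $a,b$ with $\legn{a}{n_1}=-1,\legn{a}{n_2}=1$ and $\legn{b}{n_1}=1,\legn{b}{n_2}=-1$, the maps $\phi_a,\phi_b$ induced by $x \mapsto x^a, x \mapsto x^b$ act on the triples $\vepsi$; tracking how they permute the factors $F^{(3)}_{1,\epsi_1}, F_{2,\epsi_2}, F_{3,\epsi_3}$ produces the explicit involutions $\vepsi \mapsto \vepsi^{(a)}$ and $\vepsi \mapsto \vepsi^{(b)}$, whose orbits are exactly the claimed sets $A_1$ and $A_2$. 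Verifying that these orbits coincide with the stated partition is a finite check over the eight elements of $\{\pm 1\}^3$.

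The main obstacle, and the only place where honest work is needed, is making the $n_1 \leftrightarrow n_2$ correspondence airtight at the level of the factorizations rather than waving at it: one must confirm that swapping the primes induces the claimed identification of generator polynomials and, crucially, that the Jacobi-symbol conditions defining $F^{(3)}_{1,\epsi}$ match those of $F^{(2)}_{1,\epsi}$ after the swap, including the $\gcd$ side conditions. Once that dictionary is pinned down, parts i), iii), and iv) are immediate corollaries and part ii) is the elementary orbit computation above, so I expect the bulk of the proof to be the bookkeeping in establishing the symmetry, exactly as the excerpt's remark ``it is easy to see'' suggests.
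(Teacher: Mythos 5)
Your proposal is correct and takes essentially the same approach as the paper: the paper's entire ``proof'' of Theorem~\ref{4:thm6} is the one-line observation that swapping $n_1 \leftrightarrow n_2$ turns Ding's third construction into the second, so everything follows from Theorem~\ref{4:thm5} (the paper's text cites Theorem~\ref{4:thm4} here, but that is a typo). Your added bookkeeping --- the relabelling $(\epsi_1,\epsi_2,\epsi_3)\mapsto(\epsi_1,\epsi_3,\epsi_2)$ coming from the transposition of $F_2$ and $F_3$, which one can check carries the partition $A_1\cup A_2$ of Theorem~\ref{4:thm5} exactly onto that of Theorem~\ref{4:thm6}, making even your separate orbit computation for part ii) unnecessary --- simply makes explicit what the paper leaves as ``easy to see.''
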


\section{Application to convolutional codes} \label{sec5}

The class of convolutional codes was invented by Elias \cite{E} in 1955 and has been widely in use for wireless, space, and broadcast communications since the 1970s. Compared with linear block codes, however, convolutional codes are not so well understood. In particular, there are only a few algebraic constructions of convolutional codes with good designed parameters.

Convolutional codes can be defined as subspaces over the rational functional field $\F(D)$ (see \cite{JZ,MC}) or over the field of Laurent series $\F(\!(\!D\!)\!)$ (\cite{P2}), or as submodules over the polynomial ring $\F[D]$ (\cite{GL2,S}), and the theories are all equivalent. Here we follow the presentation of \cite{GL2}. Interested readers may consult the textbooks \cite{JZ,MC,P2} for more information.

\begin{defn}
Let $\F[D]$ be the polynomial ring and $\F(D)$ the field of rational functions. For any $1 \le k \le n$, let $G(D) \in \F[D]^{k \times n}$ be a $k \times n$ polynomial matrix with rank$_{\F(D)}G(D)=k$. The rate $k/n$ convolutional code $\C$ generated by $G(D)$ is defined as the set
\[\C=\left\{u(D) G(D) \in \F(D)^n: u(D) \in \F(D)^k\right\}. \]
The matrix $G(D)$ is called a \emph{generator matrix} or an \emph{encoder} of $\C$.
\end{defn}


Let $G(D)=(g_{i,j}(D)) \in \F[D]^{k \times n}$ be an encoder of $\C$. The $i$-th row degree $v_i$ of $G(D)$ is $v_i=\max_j \deg g_{i,j}(D)$. If the encoders $G(D)$ and $G'(D)$ generate the same code $\C$, then we say $G(D)$ and $G'(D)$ are equivalent encoders. It is known that each convolutional code $\C$ can be generated by a \emph{minimal basic encoder} $G(D)$, that is, $G(D)$ has a polynomial right inverse, and the sum $\sum_{i=1}^k v_i$ of the row degrees of $G(D)$  attains the minimal value among all encoders of $\C$. Let $G(D) \in \F[D]^{k \times n}$ be a minimal basic encoder of $\C$, then $G(D)$ is automatically \emph{noncatastrophic}, that is, finite-weight codewords can only be produced from finite-weight messages, and the set $\{v_i: 1 \le i \le k\}$ of the row degrees of $G(D)$ is invariant among all minimal basic encoders of $\C$. Hence we can define the \emph{degree} of $\C$ as $\delta:=\sum_{i=1}^k v_i$ by using the minimal basic encoder $G(D)$ of $\C$. We call this $\C$ an $(n,k,\delta)_q$ convolutional code.

Each $v(D) \in \F(D)^n$ can be expanded uniquely as a Laurent series $v(D)=\sum_{j=-\infty}^{\infty}v_jD^j$ where $v_j \in \F^n$ for each $j$. Define the weight $\wt(v(D))$ as
\[\wt(v(D)):=\sum_{j=-\infty}^{\infty} \wt(v_j).\]
Here $\wt(v_j)$ denotes the Hamming weight of the vector $v_j \in \F^n$. The \emph{free distance} of the convolutional code $\C \subset F(D)^n$ is defined as
\[\D:=\min\left\{\wt(v(D)): v(D) \in \C, v(D) \ne 0\right\}.\]
It is known that (see \cite{S})
\[\D:=\min\left\{\wt(v(D)): v(D) \in \C \cap \F[D]^n, v(D) \ne 0\right\}.\]

For a given convolutional code $\C$, the four parameters $n,k,\delta,\D$ are of fundamental importance because $k/n$ is the rate of the code, $\delta$ and $\D$ determine respectively the decoding complexity and the error correcting capability of $\C$ with respect to some decoding algorithms such as the celebrated Viterbi algorithm \cite{VI}. For these reasons, for given rate $k/n$ and $q$, generally speaking, it is desirable to construct convolutional codes with relatively small degree $\delta$ and relatively large free distance $\D$. 

There are only a few algebraic constructions of convolutional codes. One particularly useful construction of convolutional codes is given in \cite[Theorem 3]{Jus}, which we describe below.

For any positive integer $n$ and any $c(x) \in \F[x]$, write
\[c(x)=\sum_{i=0}^{n-1} c_i(x^n) x^i, \]
where $c_i(x) \in \F[x]$ for each $i$. This induces a map $\phi_n$ given by
\[ \phi_n: \F[x] \to \F[x]^n, \quad c(x) \mapsto (c_i(x))_{i=0}^{n-1}.\]
For simplicity, we write $\phi_n(c(x)) \in \F[x]^n$ as an $n \times 1$ column vector. It is easy to see that for any positive integers $m,n$, this map $\phi_n$ defines a permutation equivalence from $\F[x]/(x^{mn}-1)$ to $\left(\F[x]/(x^m-1)\right)^n$.

For any $a,b$ in some finite extension of $\F$, we say that $a$ and $b$ are $n$-equivalent or belong to the same $n$-equivalence class if $a^n=b^n$.

\begin{thm}(\cite[Theorem 3]{Jus}) \label{5:just} Let $m,n$ be positive integers such that $\gcd(nm,q)=1$. Let $\C=(g(x)) \subset \F[x]/(x^{nm}-1)$ be a cyclic code of length $nm$ with the generator polynomial $g(x) \in \F[x]$. If $g(x)$ has at most $n-k$ roots in each $n$-equivalence class, then the matrix $G(D) \in \F[D]^{k \times n}$ given by
\[G(D)^T:=\left[\phi_n\left(g(D)\right),\phi_n\left(Dg(D)\right), \ldots, \phi_n\left(D^{k-1}g(D)\right)\right]\]
is a minimal basic encoder, and the code generated by $G(D)$ is a rate $k/n$ convolutional code with the free distance $\D$ satisfying $\D \ge d(\C)$. Here $G(D)^T$ is the transpose of the matrix $G(D)$, and $d(\C)$ is the minimal distance of the code $\C$.
\end{thm}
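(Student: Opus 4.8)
The plan is to route everything through the unwrapping map $\phi_n$ and push every statement about the convolutional code back to the cyclic code $\C=(g)$. First I would record the two structural facts that drive the proof. Writing $c(D)=\sum_{i=0}^{n-1}c_i(D^n)D^i$, the map $\phi_n$ is additive, is $\F[D^n]$-semilinear in the sense that $\phi_n(p(D^n)c(D))=p(D)\phi_n(c(D))$, and preserves the number of nonzero coefficients. Setting the column vector $\mathbf g:=\phi_n(g)$, semilinearity gives $\phi_n(D^jg)=S^j\mathbf g$, where $S$ is the companion-type matrix with $(Sv)_0=D\,v_{n-1}$ and $(Sv)_i=v_{i-1}$ for $i\ge 1$, so that $S^n=D\,I_n$. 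Hence the rows of $G(D)$ are the $S^j\mathbf g$, and a message $u(D)\in\F[D]^k$ produces the codeword $\phi_n(U(D)g(D))$, where $U(D)=\sum_{j=0}^{k-1}u_j(D^n)D^j$ ranges exactly over polynomials supported on the residues $0,1,\dots,k-1\pmod n$; moreover $\wt(uG)=\wt(Ug)$. This dictionary converts free-distance estimates into weight estimates for the ``thinned'' multiples $Ug$.

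Next I would show $G(D)$ is minimal basic. For basicness it suffices that $G(\beta)$ has rank $k$ for every $\beta\in\overline{\F}$, equivalently that the maximal minors are coprime. For $\beta\ne 0$ the matrix $S(\beta)$ has the $n$ distinct eigenvalues $\{\mu:\mu^n=\beta\}$ (distinct since $\gcd(n,q)=1$) with eigenvectors $(1,\mu^{-1},\dots,\mu^{-(n-1)})$, and a short Vandermonde computation identifies the $\mu$-coordinate of $\mathbf g(\beta)$ as $g(\mu)/n$; thus the Krylov matrix $[\mathbf g,S\mathbf g,\dots,S^{k-1}\mathbf g](\beta)$ has rank $\#\{\mu:\mu^n=\beta,\ g(\mu)\ne 0\}$, which is $\ge k$ precisely because $g$ has at most $n-k$ roots in the $n$-equivalence class $\{\mu^n=\beta\}$. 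The case $\beta=0$ is direct: as $g$ is monic with $g(0)\ne 0$, the nilpotent shift $S(0)$ makes the first $k$ columns staircase-shaped with nonzero pivots. For minimality I would compute the row degrees $v_j=\lfloor(\deg g+j)/n\rfloor$ and check that the leading (row-degree) coefficient matrix contains a $k\times k$ triangular submatrix with the unit leading coefficient of $g$ on its diagonal, the pivot columns $(\deg g+j)\bmod n$ being distinct since $k\le n$; hence it has rank $k$ and $G$ is reduced.

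For the free-distance bound I would take a nonzero polynomial codeword, write it as $\phi_n(Ug)$ with $U$ thinned, and reduce $Ug$ modulo $D^{nm}-1$ into $\C$. If the reduction $\overline{Ug}$ is nonzero it is a nonzero codeword of $\C$, and since reduction cannot increase weight, $\wt(Ug)\ge\wt(\overline{Ug})\ge d(\C)$. The delicate point, which I expect to be the main obstacle, is the degenerate case $\overline{Ug}=0$, i.e. $h\mid U$ for the parity-check polynomial $h=(D^{nm}-1)/g$, where reduction is useless and the root hypothesis must genuinely be used. Here the plan is a peeling lemma: evaluating $U$ on the $n$-equivalence class $\{\zeta^n=\omega\}$ of any $m$-th root of unity $\omega$ yields the polynomial $P_\omega(\zeta)=\sum_{j=0}^{k-1}u_j(\omega)\zeta^j$ of degree $<k$, which must vanish at the $\ge k$ roots of $h$ in that class; a degree count forces $P_\omega\equiv 0$, so $u_j(\omega)=0$ for all $\omega$ and all $j$, whence $(D^m-1)\mid u_j$ and $U=(D^{nm}-1)U'$ with $U'$ again thinned. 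Iterating peels off a maximal power, giving $U=(D^{nm}-1)^rU_r$ with $\overline{U_rg}\ne 0$.

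Finally I would close the degenerate case with a weight-preservation estimate for multiplication by $(D^{nm}-1)^r$. Writing $Q:=U_rg$ in blocks of length $nm$, the blocks of $(D^{nm}-1)^rQ$ are the $r$-fold finite differences of the blocks of $Q$; fixing any coordinate $z\in\supp(\overline{Q})$, the associated scalar block-sequence is nonzero, so its $r$-th difference is a nonzero polynomial and contributes at least one nonzero entry to $(D^{nm}-1)^rQ$. Summing over the $\ge d(\C)$ coordinates $z\in\supp(\overline{Q})$ gives $\wt(Ug)=\wt((D^{nm}-1)^rQ)\ge\wt(\overline{Q})\ge d(\C)$, hence $\D\ge d(\C)$. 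Throughout, $\gcd(nm,q)=1$ is used to ensure that $D^{nm}-1$ is separable, so that $\gcd(g,h)=1$ and each $n$-equivalence class of $nm$-th roots of unity splits cleanly into roots of $g$ and roots of $h$; this separability is exactly what lets the single hypothesis ``$g$ has at most $n-k$ roots per $n$-equivalence class'' drive both the minimal-basic claim and the peeling lemma.
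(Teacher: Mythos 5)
The paper never proves this statement: it is imported verbatim as \cite[Theorem 3]{Jus}, so there is no internal proof to compare yours against, and I can only judge your argument on its own terms --- it checks out as a correct, essentially self-contained proof. Your dictionary $u(D)G(D)=\phi_n\left(U(D)g(D)\right)^T$ with $U(D)=\sum_{j=0}^{k-1}u_j(D^n)D^j$ is right, and each of the three pillars is sound: the eigenvector/Vandermonde computation gives $\mathrm{rank}\,G(\beta)=\min\bigl(k,\#\{\mu:\mu^n=\beta,\ g(\mu)\ne 0\}\bigr)$, which the root hypothesis forces to equal $k$ (the $\beta=0$ case being the triangular one, using $g(0)\ne 0$), so $G$ is basic; the staircase structure of the high-order coefficient matrix (row $j$ vanishes in columns beyond $(\deg g+j)\bmod n$ and carries the unit leading coefficient of $g$ there) gives reducedness; and the peeling lemma plus the finite-difference weight bound correctly dispose of the degenerate case $(D^{nm}-1)\mid Ug$, which is indeed the crux --- a naive ``reduce modulo $D^{nm}-1$'' argument dies exactly there, and your observation that $h\mid U$ forces $P_\omega\equiv 0$ by a degree count (at least $k$ roots of $h$ per class versus $\deg P_\omega\le k-1$) is the right way through. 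Two standard facts should be cited rather than left implicit: (i) that $\D$ is attained on polynomial codewords (the paper quotes this from \cite{S}), combined with basicness --- which you prove first, so the order of your argument is consistent --- to guarantee that a polynomial codeword $v$ arises from the polynomial message $u=vH$ with $H$ a polynomial right inverse of $G$; and (ii) the criterion (see \cite{GL2}, \cite{MC}) that a basic encoder is minimal basic precisely when its high-order coefficient matrix has rank $k$, which is what your reducedness computation verifies. Finally, it is worth remarking that your proof runs on the same engine as the paper's own Theorem \ref{2:thm1}: evaluating at the $n$ elements of an $n$-equivalence class is exactly the Chinese-Remainder/Vandermonde decomposition used there, so your argument would make Section \ref{sec5} self-contained with the toolkit the paper already develops.
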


We remark that Theorem \ref{5:just} has been used in the construction of MDS-convolutional codes for any rate $k/n$ and any degree $\delta$ over a large field $\F$ (\cite[Theorem 3.3]{S}). In light of Theorem \ref{5:just}, it is readily seen that cyclic codes of composite length $mn$ and the estimate of the minimal distance are very useful in the construction of convolutional codes with prescribed rate and free distance. We also remark that generally speaking, it is not an easy task to construct a generator polynomial $g(x)$ such that $g(x)$ has at most $n-k$ roots in each $n$-equivalence class and the corresponding cyclic code has large minimal distance at the same time (see \cite[Theorem 6]{Jus}). Using Theorem \ref{5:just}, we note that the  cyclic codes from (\ref{3:defn1}) and from Ding's $1^{\mathrm st}$ construction in (\ref{4:defn1}) are useful to construct convolutional codes with large free distance.

\begin{thm} \label{5:thm9} Let $\F$ be a finite field of order $q$.
\begin{itemize}
\item[(i)] Let $n$ be a prime number such that $\legn{q}{n}=1$, and $r \ge 2$ be a positive integer such that $\gcd(nr,q)=\gcd(n,r)=1$. Let $\vepsi \in \Lambda \setminus
\{ (0,\ldots,0), (1,\ldots,1)\} \subset \{\pm 1\}^r$ where $\Lambda$ is given in (\ref{3:Lambda}), and let $\C_{\vepsi}=(g_{\vepsi}(x)) \subset \F[x]/(x^{nr}-1)$ be the cyclic code of length $nr$ given in (\ref{3:defn1}). Then for any $1 \le k \le (n+1)/2$, the code generated by the encoder $G(D)$ given by
\[G(D)^T:=\left[\phi_n\left(g_{\vepsi}(D)\right),\phi_n\left(Dg_{\vepsi}(D)\right), \ldots, \phi_n\left(D^{k-1}g_{\vepsi}(D)\right)\right]\]
is a rate $k/n$ convolutional code with the free distance satisfying $\D >\sqrt{n}$.

\item[(ii)] Let $n_1,n_2$ be two distinct odd primes such that $\legn{q}{n_1}=\legn{q}{n_2}=1$ and $\gcd(n_1n_2,q)=1$. For any $\vepsi \in
\{\pm 1\}^3$, let $\cone{1}=(\gone{1}(x)) \subset \F[x]/(x^{n_1n_2}-1)$ be the cyclic code of length $n_1n_2$ given in (\ref{4:defn1}). Then for any $1 \le k \le (n_1-1)/2$, the code generated by the encoder $G(D)$ given by
\[G(D)^T:=\left[\phi_{n_1}\left(g_{\vepsi}(D)\right),\phi_{n_1}\left(Dg_{\vepsi}(D)\right), \ldots, \phi_{n_1}\left(D^{k-1}g_{\vepsi}(D)\right)\right]\]
is a rate $k/n_1$, degree $\frac{(n_2-1)k}{2}$ convolutional code over $\F$ with the free distance satisfying $\D >\max\left\{\sqrt{n_1},\sqrt{n_2}\right\}$.
\end{itemize}
\end{thm}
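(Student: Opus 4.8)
The plan is to verify the hypotheses of Theorem \ref{5:just} for the two families of cyclic codes and then read off the free-distance bounds from the minimum-distance estimates already established. Since Theorem \ref{5:just} guarantees both that $G(D)$ is a minimal basic encoder and that $\D \ge d(\C)$ whenever $g(x)$ has at most $n-k$ roots in each $n$-equivalence class, the entire argument reduces to two things: counting roots of the generator polynomial within each $n$-equivalence class, and combining the resulting bound on $k$ with the distance estimates from Theorems \ref{3:thm3} and \ref{4:thm4}.

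For part (i), I would first recall that the $n$-equivalence classes are exactly the fibers $\{a : a^n = \lambda^t\}$ for $0 \le t \le r-1$, since $\theta$ is a primitive $nr$-th root of unity and $\lambda = \theta^n$ has order $r$. The roots of $g_{\vepsi}(x)$ lying in the class corresponding to $\lambda^t$ are precisely the roots of $g_{t,\epsi_t}(x)$, and by the factorization $x^n-\lambda^t = (x-\theta_t)g_{t,1}(x)g_{t,-1}(x)$ each $g_{t,\epsi_t}(x)$ has degree $(n-1)/2$. Thus $g_{\vepsi}(x)$ has exactly $(n-1)/2$ roots in each $n$-equivalence class. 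Requiring $(n-1)/2 \le n-k$, i.e. $k \le (n+1)/2$, is exactly the stated range, so Theorem \ref{5:just} applies for all $1 \le k \le (n+1)/2$. The free distance then satisfies $\D \ge d(\C_{\vepsi})$, and since $\vepsi \notin \{(0,\ldots,0),(1,\ldots,1)\}$, part 3) of Theorem \ref{3:thm3} gives $d(\C_{\vepsi}) \ge \min\{d_{n,q}+1, (\sqrt{8n+1}-1)/2\}$; as both quantities exceed $\sqrt{n}$ (using $d_{n,q} > \sqrt{n}$ for the first and an elementary inequality for the second), we conclude $\D > \sqrt{n}$.

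For part (ii), the root-counting is analogous with $n = n_1$ and $r = n_2$. The relevant factorization of $x^{n_1} - \lambda^t$ shows that the roots of $\gone{1}(x)$ in the $n_1$-equivalence class associated to $\lambda^t$ are governed by the local generator $g_t(x)$ computed in the proof of Theorem \ref{4:thm4}: for $t \ne 0$ this is a quadratic-residue factor (possibly multiplied by $(x-1)$) of degree $(n_1-1)/2$ or $(n_1+1)/2$, and for $t=0$ it has degree $(n_1-1)/2$. The largest number of roots in any single class is therefore at most $(n_1+1)/2$; to guarantee at most $n_1 - k$ roots in every class I would impose $(n_1+1)/2 \le n_1 - k$... which would force $k \le (n_1-1)/2$, matching the stated range. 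The degree computation $\delta = (n_2-1)k/2$ comes directly from the row degrees of the minimal basic encoder $G(D)$, each row having degree $\deg g_{\vepsi}/n_1 \approx (n_2-1)/2$. Finally $\D \ge d(\cone{1}) > \max\{\sqrt{n_1},\sqrt{n_2}\}$ follows from part i) of Theorem \ref{4:thm4}.

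The main obstacle I anticipate is the careful verification of the root-count in part (ii): one must check that the factor $(x-1)$, which appears in $g_t(x)$ for roughly half the values of $t$, is correctly attributed to a single $n_1$-equivalence class (namely the one containing the primitive contribution at $\theta_t$) and does not inflate the per-class count beyond $(n_1+1)/2$. Establishing that the worst-case class contains at most $(n_1+1)/2$ roots — and confirming this is tight enough to yield exactly the claimed range $k \le (n_1-1)/2$ rather than a weaker bound — is the delicate step; everything else is bookkeeping plus direct citation of the earlier theorems.
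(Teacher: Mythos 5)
Your proposal is correct and follows exactly the paper's route: the paper's own proof consists of invoking Theorem \ref{5:just} and noting that the generator polynomials have at most $n-k$ (resp.\ $n_1-k$) roots in each $n$-equivalence class, with the free-distance bounds then read off from Theorems \ref{3:thm3} and \ref{4:thm4}. In fact you supply the root-counting details (exactly $(n-1)/2$ roots per class in (i), and at most $(n_1+1)/2$ per class in (ii), forcing $k\le (n_1-1)/2$) that the paper dismisses as ``almost trivial'' and omits.
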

\begin{proof} The only thing we need to check is that the generator polynomial $g(x)$ in (i) and (ii) has at most $n-k$ roots in each $n$-equivalence class, which is almost trivial, so we omit the details.
\end{proof}

\begin{exam} For $q=2,n=7,r=3$, using the cyclic code in (\ref{3:defn1}), for $\vepsi=(1,-1,-1)$, we find
\begin{eqnarray*} g_{\vepsi}(x)&=& x^{9} + x^{8} + x^{7} + x^{5} + x^{4} + x +    1.\end{eqnarray*}
The code $\C_{\epsi}=(g_{\vepsi}(x)) \subset \mathbb{F}_2[x]/(x^{21}-1)$ has parameters $[21,12,5]$. We may write
\[g_{\vepsi}(x)= \sum_{i=0}^6 g_i(x^7) x^i,\]
where
\[\begin{array}{l}
g_0(x)=1+x,\,\, g_1(x)=1+x,\,\,
g_2(x)=x,\\
g_3(x)=0,\,\,
g_4(x)=1,\,\,
g_5(x)=1,\,\,
g_6(x)=0.
\end{array}\]
Taking $k=4$, from (1) of Theorem \ref{5:thm9}, the matrix $G(D) \in \mathbb{F}_2[D]^{3 \times 7}$ given by
\[G(D)=\left[\begin{array}{ccccccc}
1+D&1+D&D&0&1&1&0\\
0&1+D&1+D&D&0&1&1\\
D&0&1+D&1+D&D&0&1\\
D&D&0&1+D&1+D&D&0
\end{array}\right]\]
is a minimal basic encoder which generates a rate 4/7, degree 4 convolutional code $\C$ over $\mathbb{F}_2$ with free distance $\D \ge 5$. Actually the codeword $c=[1,1,1,0]*G(D)=[1,0,D,1,1+D,0,0]$ has weight $5$, so $\D=5$. We would like to mention that the Heller bound \cite[Corollary 3.18]{JZ} implies that the largest free distance $\D$ of any binary, rate $4/7$ and unit-memory convolutional code satisfies $\D \le 7$.

\end{exam}
\begin{exam} For $q=2,n_1=7,n_2=17$, using Ding's $1^{\mathrm st}$ construction (see (\ref{4:defn1})), for $\vepsi=(1,1,1)$, we find
\begin{eqnarray*} \gone{1}(x)&=& x^{59} + x^{58} + x^{57} + x^{56} + x^{51} + x^{50}+ x^{49} + x^{48} + x^{47} + x^{46} + x^{45}
    + x^{44} \\
&&    + x^{43} + x^{39}
    + x^{38} + x^{37} + x^{36} + x^{33}
+ x^{32} + x^{30} + x^{24} + x^{22}    \\
&&+ x^{20} + x^{19}
+ x^{18} + x^{15} + x^{14} + x^{9} + x^{8} + x^{7} + x^{6} + x^{5} + x^{4} + x +    1.\end{eqnarray*}
The code $\cone{1}=(\gone{1}(x)) \subset \mathbb{F}_2[x]/(x^{119}-1)$ has parameters $[119,60,12]$. We may write
\[\gone{1}(x)= \sum_{i=0}^6 g_i(x^7) x^i,\]
where
\[\begin{array}{l}
g_0(x)=1+x+x^2+x^7+x^8,\\
g_1(x)=1+x+x^2+x^3+x^5+x^6+x^7+x^8,\\
g_2(x)=x+x^4+x^5+x^6+x^7+x^8,\\
g_3(x)=x^3+x^5+x^6+x^8,\\
g_4(x)=1+x^2+x^4+x^5+x^6,\\
g_5(x)=1+x^2+x^4+x^6,\\
g_6(x)=1+x^2+x^6.
\end{array}\]
Taking $k=3$, from (2) of Theorem \ref{5:thm9}, the matrix $G(D) \in \mathbb{F}_2[D]^{3 \times 7}$ given by
\[G(D)=\left[\begin{array}{lllllll}
g_0(D)&g_1(D)&g_2(D)&g_3(D)&g_4(D)&g_5(D)&g_6(D)\\
Dg_6(D)&g_0(D)&g_1(D)&g_2(D)&g_3(D)&g_4(D)&g_5(D)\\
Dg_5(D)&Dg_6(D)&g_0(D)&g_1(D)&g_2(D)&g_3(D)&g_4(D)
\end{array}\right]\]
is a minimal basic encoder which generates a rate 3/7, degree 24 convolutional code $\C$ over $\mathbb{F}_2$ with free distance $\D \ge 12$.

\end{exam}
\section{Conclusion}\label{conclude}

In this paper, we develop a general method to study cyclic codes of composite length and to estimate the minimal distance (Theorem \ref{2:thm1}), which can be used to study the cyclic codes from Ding's three constructions (Theorems \ref{4:thm4}, \ref{4:thm5}, \ref{4:thm6}). We also give a general construction of cyclic codes of composite length which are related to quadratic residue codes of prime length and study some of their properties (Theorems \ref{3:thm2} and \ref{3:thm3}). The cyclic codes in this paper can be used to construct convolutional codes with large free distance. While we have obtained an ``almost'' square-root bound on the minimal distance of all these cyclic codes that we are interested in, numerical data shows that the minimal distance of these codes should be much larger. So we list the following as open questions:

\begin{itemize}
\item[(1)] For the code $\C_{\vepsi}$ given in (\ref{3:defn1}), is it possible to improve the lower bound on $d \left(\C_{\vepsi}\right)$ in Theorem \ref{3:thm3} when $\vepsi \ne (0,\ldots,0)$ or $(1,\ldots,1)$?

\item[(2)] For the codes $\cone{i}$ ($i=1,2,3$) from Ding's three constructions, is it possible to improve the lower bound on $d \left(\cone{i}\right)$ from Theorems \ref{4:thm4}, \ref{4:thm5} and \ref{4:thm6}? In particular is it true that $d \left(\cone{1}\right) \ge \sqrt{n_1n_2}$ for any $\vepsi \in
\{\pm 1\}^3$ and $d \left(\cone{i}\right) \ge \sqrt{n_1n_2}$ for any $i=2,3$ and $\vepsi =(1,-1,-1)$?

\end{itemize}






\subsection*{Acknowledgments} The author is grateful to Professor Cunsheng Ding for many useful suggestions of the paper. In particular the numerical computation by the end of Section \ref{sec3} can not be done so efficiently without Professor Ding's generous sharing of his original Magma codes from the work \cite{Ding1}. 



\begin{thebibliography}{99}


\bibitem{Ding1} C. Ding, \emph{Cyclotomic constructions of cyclic codes with length being the product of two primes}, IEEE Trans. Inform. Theory {\bf 58} (2012), no. 4, 2231--2236.

\bibitem{Ding2} C. Ding, ``Codes from difference sets'', World Scientific Publishing, 2015.

\bibitem{Ding0} C. Ding, V. Pless, \emph{Cyclotomy and Duadic codes of prime lengths}, IEEE Trans. Inform. Theory {\bf 45} (1999), no. 2, 453--466.



\bibitem{E} P. Elias, ``Coding for noisy channels,'' \emph{IRE Conv. Rep.}, Pt. 4, pp. 37--47, 1955.


\bibitem{GL2} H. Gluesing-Luerssen, J. Rosenthal, and R. Smarandache, ``Strongly-MDS convolutional codes,'' \emph{IEEE Trans. Inform. Theory}, {\bf 52} (2006), no. 2, 584--598.

\bibitem{Huf} W. C. Huffman, V. Pless, \emph{Fundamentals of Error-Correcting Codes}. Cambridge University Press, 2003.

\bibitem{JZ} R. Johannesson and K. S. Zigangirov, \emph{Fundamentals of Convolutional Coding}. Piscataway, NJ: IEEE Press, 1999.

\bibitem{Jus} J. Justesen, \emph{New convolutional code constructions and a class of asymptotically good time-varying codes}, IEEE Trans. Inform. Theory {bf 19} (1973), no. 2, 220--225.





\bibitem{MC} R. J. McEliece, ``The algebraic theory of convolutional codes,'' in \emph{Handbook of Coding Theory, V}. Pless and W. C. Huffman, Eds. Amsterdam, The Netherlands: Elsevier, 1998, vol. 1, pp. 1065--1138.

\bibitem{P2} P. Piret, \emph{Convolutional Codes, an Algebraic Approach}. Cambridge, MA: MIT Press, 1988.

\bibitem{S} R. Smarandache, H. Gluesing-Luerssen, and J. Rosenthal, ``Constructions for MDS-convolutional codes,'' \emph{IEEE Trans. Inform. Theory}, vol. 47, no. 5, pp. 2045--2049, 2001.

\bibitem{VI} A. J. Viterbi, ``Error bounds for convolutional codes and an asymptotically optimum decoding algorithm,'' \emph{IEEE Trans. Inform. Theory}, vol. 13, pp. 260--269, 1967.



\end{thebibliography}
\end{document}